\newtheorem{prop}{Proposition}
\DeclareFontFamily{OT1}{pzc}{}
\DeclareFontShape{OT1}{pzc}{m}{it}{<-> s * [1.10] pzcmi7t}{}
\DeclareMathAlphabet{\mathpzc}{OT1}{pzc}{m}{it}
\begin{document}

\thispagestyle{empty}
\bibliographystyle{siam}

\title{Total communicability\\ as a centrality measure}

\author {
Michele Benzi\thanks{Department of Mathematics and Computer
Science, Emory University, Atlanta, Georgia 30322, USA
(benzi@mathcs.emory.edu). The work of this author was supported
by National Science Foundation grant
DMS1115692.} \and
Christine Klymko\thanks{Department of Mathematics and Computer Science,
Emory University, Atlanta, Georgia 30322, USA
(cklymko@emory.edu).}
}

\maketitle

\markboth{{\sc M.~Benzi and C.~Klymko}}{{Total communicability}}

\begin{abstract}
We examine node centrality measures based on the notion
of {\em total communicability}, defined in terms of the 
row sums of matrix functions of the adjacency matrix of the network.
Our main focus is on the matrix exponential and the resolvent, which
have natural interpretations in terms of walks on the underlying
graph.
While such measures have been used before for ranking nodes in a network,
we show that they can be computed very rapidly even in the 
case of large networks. Furthermore, we propose the (normalized) total sum
of node communicabilities as a useful measure of network 
connectivity. Extensive numerical studies are conducted
in order to compare this centrality measure with the closely related
ones of subgraph centrality [E.~Estrada and J.~A.~Rodr\'iguez-Vel\'azquez,
Phys.~Rev.~E, 71 (2005), 056103] and Katz centrality 
[L.~Katz, {\em Psychometrica}, 18 (1953), pp.~39--43]. Both synthetic
and real-world networks are used in the computations.
\end{abstract}

\begin{keywords} 
centrality, communicability, adjacency matrix, matrix functions, network
analysis
\end{keywords}


\section{Introduction}
Over the past several years, the analysis of networks has become increasingly 
important in a number of disciplines 
\cite{Caldarelli,crofootetal,Ebook11,Ebook10,EHB11,Pagerank,New03,dhillon}.  
Network analysis 
is used in many situations: from determining network structure and 
communities, to describing the interactions between various elements 
of the network, to investigating the dynamics of phenomena taking place
on the network (e.g., information flow).

One of the fundamental questions in network analysis is to determine 
the ``most important" elements in a given network. Measures of node importance 
are usually referred to as {\em node centrality}, and many centrality measures 
have been proposed, starting with the simplest of all, the node degree. This
crude metric has the drawback of being too ``local", as it does not take
into effect the connectivity of the immediate neighbors of the node under
consideration. A number of more sophisticated centrality measures have
been introduced that take into account the global connectivity properties
of the network. These include various types of eigenvector centrality
for both directed and undirected networks, 
betweenness centrality, and others which are discussed below. 
 Overviews of various centrality measures can be found in 
\cite{Bocca06,Bonacich,Brandes,Ebook11,IRSurvey,New10,NewBarWat03}.  The 
centrality scores can be used to provide {\em rankings} of the 
nodes in the network. There are many different ranking methods 
in use (most of which depend on centrality measures), and many 
algorithms have been developed to compute these rankings.  
Information about the many different ranking schemes can be found, 
e.g., in \cite{BEK,Ebook11,Katz,HITS,IRSurvey,Pagerank,Whos,SALSA}.

One now standard method of measuring node importance is {\em subgraph 
centrality} \cite{estradarodriguez05}, which is based on
the diagonal entries of a matrix function applied to the adjacency 
matrix $A$ of the network in question.  Here, the matrix exponential
$e^A$ is frequently used.  While this approach has been successfully
used in a number of problems \cite{Ebook11,EHB11,NetworkProp},
 obtaining estimates of the 
diagonals of $e^A$ for a large network with adjacency matrix $A$ can be 
quite expensive. Indeed, computing individual entries of matrix 
functions $f(A)$ is generally costly for large $A$ even with the 
best available algorithms \cite{BenziBoito,EHB11}. 

In recent years, efficient algorithms have been developed for
computing the {\em action} of a matrix function on a vector, that is,
for computing the vector $f(A){\bf v}$ for a given matrix $A$ (usually large
and sparse), vector $\bf v$, and function $f$. A particularly important
case is that of the matrix exponential, since this provides a
solution method for initial value problems for first-order systems 
of linear ordinary differential equations. These algorithms, based
on variants of the Lanczos, Arnoldi or other Krylov subspace method,
access the matrix $A$ only in the form of (sparse) matrix-vector
products and have $O(n)$ storage cost for a sparse $n\times n$ matrix $A$
\cite[Chapter 13]{highambook}. When ${\bf v} = {\bf 1}$, the
vector with all its entries equal to 1, the $i$th entry
of the resulting
vector $f(A){\bf 1}$ contains the $i$th row sum of $f(A)$:
$$\left [ f(A){\bf 1} \right ]_i = \sum_{j=1}^n \left [ f(A) \right ]_{ij},
\quad 1\le i \le n\,.$$
This quantity, which has a graph-theoretic interpretation 
in terms of subgraph centrality and communicability \cite{estradahatano2,EHB11}, 
can be computed much faster than subgraph centrality using current
computational techniques. Of course, the same is true if the vector
${\bf 1}$ is replaced by some other vector---typically, an \lq\lq external
importance vector\rq\rq\ which can be used to take into account intrinsic,
not network-related contributions to the centrality of each node 
\cite[pp.~174--175]{New10}.

Such centrality measures have long been in use in network analysis. 
Note that for the case of the \lq\lq identity function\rq\rq\
$f(A) = A$, and symmetric $A$ (undirected networks),
we recover degree centrality. The off-diagonal row sums of $e^A$ 
have been used in social network analysis to measure the resilience 
of an individual in the face of hostile attacks from within the network 
\cite[Chapter 6]{Ebook11}. More recently, row and colums sums of $e^A$
have been applied to the
identification of hubs and authorities in directed networks \cite{BEK}. 
For resolvent-type functions,
such as $f(A) = (I-\alpha A)^{-1}$ (with $I$ the $n\times n$
identity matrix), for suitable values of
$\alpha >0$, we recover the well-known {\em Katz centrality}
and its variants, also known as $\alpha$-centrality; see, e.g.,
\cite{Katz} and \cite{Bonacich,BL01,BE06,GH13}. None of these
previous studies, however, considered algorithmic aspects such
as computational cost, storage, and so forth.

This paper considers the implications of using the row sums of $e^A$ or
similar matrix functions as a 
measure of node centrality, focusing for the sake of 
brevity on undirected networks.  The 
interpretation of this measure in terms of total
communicability of a node is given, and compared to the 
one for subgraph centrality in section \ref{sec:diagonal_vs_sum}.  
In section \ref{sec:total_network}, the concept of 
{\em total network communicability}
is introduced and discussed.
Section \ref{sec:experiments} contains experimental comparisons 
of subgraph centrality and total communicability using
various synthetic and real-world networks.  Sections \ref{sec:approx} 
and \ref{sec:resolvent} discuss computational aspects 
and the use of row sum centrality with other standard matrix functions, 
respectively.
We offer some conclusive remarks in section \ref{sec:concl}.

\section{Background and definitions}
The analysis of networks requires the use of notions
from graph theory, linear algebra, numerical analysis, 
and computer science.  Here we list some basic  
definitions and ideas from graph theory.  A more complete overview 
can be found in \cite{Die00}.  

A {\em graph} $G=(V,E)$ is a set of nodes (vertices) $V$ with 
$|V|=n$ and edges $E = \{(i,j)| i,j \in V\}$.  A graph is {\em undirected} 
if the edges are unordered pairs of vertices and {\em directed} 
if the pairs are ordered (edges have a direction).  The {\em degree} 
of a vertex in an undirected graph is the number of edges which 
are adjacent to the node.  In a directed graph, nodes have both 
an {\em in-degree}, the number of edges pointing into the node, 
and an {\em out-degree}, the number of edges starting at the node 
and pointing away.  A {\em simple} graph is a graph with no 
{\em loops} (edges from node $i$ to itself), no multiple edges, 
and unweighted edges.  In this paper, all networks correspond 
to simple, undirected graphs unless otherwise specified.

A {\em walk} of length $k$ on a graph $G$ is a sequence of 
vertices $v_1, v_2, \ldots, v_{k+1}$ such that $(v_i,v_{i+1}) 
\in E$ for all $1 \leq i \leq k$.  A {\em path} is a walk 
with no repeated vertices.  A {\em closed walk} is a 
walk that starts and ends at the same vertex.  A {\em cycle} 
is a closed walk with no repeated vertices. If any vertex 
in the graph is reachable from any other vertex,
the graph is said to be {\em connected}.

Every graph can be viewed as a matrix through the use of its 
{\em adjacency matrix}.  The adjacency matrix of a network 
with graph $G$ is given by 
$$A = (a_{ij}); \quad a_{ij}=\left\{\begin{array}{ll}
1,& \textnormal{ if } (i,j) \textnormal{ is an edge in } G,\\
0, & \textnormal{ else. }
\end{array}\right .
$$
The requirement of unweighted edges causes $A$ to be binary 
and that of no loops in the graph forces $A$ to have zeros 
along its diagonal.  If the network is undirected, $A$ will 
be symmetric but if the network is directed, $A$ will generally 
be unsymmetric.  In the case of an undirected network, the 
eigenvalues of $A$ will be real. We label the eigenvalues of $A$
in non-increasing order: 
$\lambda_1 \geq \lambda_2 \geq \ldots \geq \lambda_n$. Note that
the Perron--Frobenius theorem implies that $\lambda_1 > \lambda_2$ if the graph is
connected (equivalently, if $A$ is irreducible).

\section{Diagonal entries vs.~row sums}
\label{sec:diagonal_vs_sum}
In \cite{estradarodriguez05}, the authors introduce the concept of subgraph 
centrality as a centrality measurement for nodes in a network.  
This provides a ranking based on the diagonal entries of a matrix 
function applied to the adjacency matrix.  Although there are various 
choices of function to use, the most common is the matrix exponential.  
The {\em subgraph centrality} of node $i$ is given by $[e^A]_{ii}$ 
where $A$ is the adjacency matrix of the network.  
The {\em subgraph communicability} between nodes $i$ and $j$ 
is given by $[e^A]_{ij}$ (note that in the case of an undirected 
network, $A$ is symmetric and $[e^A]_{ij}=[e^A]_{ji}$).  A node 
with a (relatively) large subgraph centrality is considered to be more important 
in the network and is given a higher ranking than nodes with lower 
subgraph centrality.  A (relatively) large subgraph communicability between a pair 
of nodes $i$ and $j$ indicates that information flows more easily between 
those two nodes than between pairs of nodes with lower communicability.  
In other words, a low subgraph communicability indicates that the two 
nodes cannot easily exchange information.  
Network communicability can also be interpreted in terms of the correlations between
different components of physical systems; see, e.g., \cite{EHB11}. 

The reasoning behind using the diagonal entries of $e^A$ as a measure 
of the centrality of a node in the network can be seen by considering 
the power series expansion of $e^A$ \cite{highambook}:

\begin{equation}
e^A = I + A + \frac{A^2}{2!} + \frac{A^3}{3!} + \cdots + 
\frac{A^k}{k!} + \cdots = \sum_{k=0}^{\infty} \frac{A^k}{k!}\,.
\end{equation}

It is well known in graph theory (and fairly easy to prove) that 
if $A$ is the adjacency matrix of a network with unweighted 
edges, then $[A^k]_{ij}$ counts the number of walks of 
length $k$ between nodes $i$ and $j$.  Thus, the subgraph 
centrality of node $i$, which is equal to $[e^A]_{ii}$, counts 
the number of closed walks centered at node $i$ weighting 
a walk of length $k$ by a penalty factor of $\frac{1}{k!}$.  
In this way, shorter walks are deemed more important than 
longer walks.  Although some of these walks can be described 
as ``illogical'' (for example, the walk $v_i \rightarrow v_j 
\rightarrow v_i \rightarrow v_j \rightarrow v_i$ is a closed walk 
of length 4 centered at node $i$), the subgraph centrality of 
node $i$ still gives us a measure of how close node $i$ is to 
everything else in the network.

By contrast, the row sum of $e^A$ for node $i$ is given by 
$\sum_{j=1}^n [e^A]_{ij}$, which counts all walks between 
node $i$ and all the nodes in the network (node $i$ included), 
weighting walks 
of length $k$ by a penalty factor of $\frac{1}{k!}$.  
Thus, the $i$th row sum of $e^A$ can be interpreted as the {\em total
subgraph communicability} of node $i$, and can be interpreted
as a measure of the importance of the $i$th node in the network, 
since a node with high communicability with a large number of
other nodes in the network is likely to be an important node,
and certainly a more important node than one characterized by
low total communicability. 

An immediate question is how this centrality measure compares with
the subgraph centrality 
of node $i$ in the network. In general, the rankings produced by 
the total communicability measure will not 
be the same as those produced by the subgraph centrality measure.  
The difference between the two rankings is
\begin{equation}
\label{eq:centrality_diff}
\sum_{j=1}^n [e^A]_{ij} - [e^A]_{ii} = 
\sum_{j\neq i} [e^A]_{ij} = \sum_{j \neq i} 
\sum_{k=1}^n e^{\lambda_k} v_{ki}v_{kj}\,,
\end{equation}
where $v_{ik}$ is the $i$th element of the normalized eigenvector 
${\bf v}_k$ of $A$ associated with the eigenvalue $\lambda_k$. 
Note that $e^A$ is always positive definite and that its diagonal
entries are often large compared to the off-diagonals. 
If the diagonal entries of $e^A$ vary over a wide range while its
off-diagonal sums remain confined within a more narrow range,
the rankings produced by the 
two methods will not differ by much.  
However, this depends both on the spectrum of $A$ and the entries 
of the eigenvectors.

While it appears to be difficult, in general, to establish 
a relation between the rankings 
produced by the subgraph centrality and total 
communicability, for certain types of simple graphs it is 
easy to show that the two methods will give identical rankings.  
These include {\em complete graphs} and 
{\em cycles} (where each node has the exact same ranking under 
both systems), {\em paths} and {\em star graphs}.  A star graph on $n$ nodes 
has one central node that is connected to each of the $n-1$ 
remaining nodes and no other edges.  Under both ranking 
systems, the central node is ranked highest and the remaining 
nodes all have the same scores.  This can be shown either 
using graph theory or by examining the eigenvalues and 
eigenvectors of the star graph (more information about 
the spectra of star graphs can be found in \cite{stareig}).

One case where the two measures could be expected to give similar
rankings is that of networks with a large {\em spectral gap}, which 
for the purposes of this paper is the difference $\lambda_1 - \lambda_2$
between the first (largest) and second eigenvalue. We have:
$$\left [e^A\right ]_{ii} = e^{\lambda_1} v_{1i}^2 + \sum_{k=2}^n  e^{\lambda_k}v_{ki}^2$$
and 
$$\left [e^A{\bf 1}\right ]_i = e^{\lambda_1} 
\left ({\bf v}_1^T{\bf 1}\right )v_{1i}
+ \sum_{k=2}^n  e^{\lambda_k}\left ({\bf v}_i^T{\bf 1}\right )v_{ki}.$$
Dividing both expressions by the constant $e^{\lambda_1}$ (which does
not affect the rankings) and observing\footnote{By the Perron--Frobenius 
Theorem, the dominant
eigenvector can be chosen to have nonnegative entries, and
positive entries when the
graph $G$ is connected.} 
that
${\bf v}_1^T{\bf 1} = \|{\bf v}_1\|_1$ shows that for $\lambda_1 \gg \lambda_2$ the
two rankings are largely determined by the quantities
$v_{1i}^2$ and $\|{\bf v}_1\|_1 v_{1i}$,
respectively, and therefore by the entries $v_{1i}$ of the dominant
eigenvector of $A$. 
Thus, 
if the difference $\lambda_1 - \lambda_2$ is sufficiently large,
the two centrality measures reduce to  
eigenvector centrality \cite{Bonacich} and therefore can be expected to result
in very similar rankings, especially for the top nodes.
Numerical experiments (not shown here) performed on Erd\"os--Renyi graphs 
with large spectral gaps have confirmed this fact.

However, it is difficult
to quantify {\em a priori} how large the spectral gap needs to be for all these
rankings to be identical (or even approximately the same).
In the section on computational experiments we will see that 
there can be significant differences between the rankings obtained
using subgraph centrality and those using
total communicability centrality, even
for networks with a relatively large spectral gap.

\section{Total network communicability}
\label{sec:total_network}
The total communicabilities of individual nodes give a measure 
of how well each node communicates with the other nodes of the network.  
In order to measure how effectively communication takes place 
across the network as a whole, we consider the sum of all 
the total communicabilities.  For a network with adjacency matrix 
$A$, this is given by 
\begin{equation}
C(A) = \sum_{i=1}^n \left [e^A{\bf 1}\right ]_i = \sum_{i=1}^n 
\sum_{k=1}^n  e^{\lambda_k}\left ({\bf v}_i^T{\bf 1}\right )v_{ki} = {\bf 1}^Te^A{\bf 1}\,,
\end{equation}
 where, as in section \ref{sec:diagonal_vs_sum}, $\lambda_k$ is the 
$k$th eigenvalue of $A$ and $v_{ik}$ is the $i$th element of the normalized 
eigenvector ${\bf v}_k$ associated with $\lambda_k$. 
Here we propose to use the {\em total 
network communicability}, $C(A)$, as a global measure of the ease of 
sending information across a network. We emphasize that while $C(A)$
is defined as the sum of all the entries of $e^A$, it is not necessary
to know any of the individual entries of $e^A$ to compute $C(A)$; indeed,
very efficient methods exist to compute quadratic forms of the type
${\bf v}^Tf(A){\bf v}$ for a given function $f(x)$, matrix $A$
and vector $\bf v$, see \cite{BenziBoito,BenziGolub,golubmeurantbook}.
 
It is instructive to compare the total communicability of a
network with the {\em Estrada index}, an important graph invariant
defined as the sum of all the subgraph centralities:
$$EE(A) = \sum_{i=1}^n \left 
[e^A\right]_{ii}= \sum_{i=1}^n e^{\lambda_i} = {\rm Tr}(e^A).$$  
The following proposition provides simple lower and upper 
bounds for $C(A)$ in terms of $EE(A)$ and other spectral
quantities associated with the underlying network.
 
\begin{prop}\label{prop1}
Let $A$ be the adjacency matrix of a simple
network on $n$ vertices.  Then, 
$$EE(A) \leq C(A) \leq n\,e^{\|A\|_2}\,,$$
\label{prop:tnc_exp}
where $\|A\|_2$ denotes the spectral norm of $A$.
In particular, for an undirected network we have
$$EE(A) \leq C(A) \leq n\, e^{\lambda_1}.$$
 \end{prop}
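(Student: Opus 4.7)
The plan is to handle the two inequalities essentially independently, relying only on the series definition of $e^A$, on the nonnegativity of $A$, and on standard norm/spectral identities.

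For the lower bound $EE(A) \leq C(A)$, I would start from the observation that the adjacency matrix of a simple network has nonnegative entries, hence so does every power $A^k$, and therefore so does $e^A = \sum_{k\geq 0} A^k/k!$ (termwise). Writing
\[
C(A) - EE(A) = \sum_{i=1}^n\sum_{j=1}^n [e^A]_{ij} - \sum_{i=1}^n [e^A]_{ii} = \sum_{i \neq j}[e^A]_{ij},
\]
the right-hand side is a sum of nonnegative numbers, giving the inequality immediately. No spectral information is needed for this part.

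For the general upper bound $C(A)\leq n\, e^{\|A\|_2}$, I would write $C(A) = \mathbf{1}^T e^A \mathbf{1}$ and apply Cauchy--Schwarz twice:
\[
\mathbf{1}^T e^A \mathbf{1} \;\leq\; \|\mathbf{1}\|_2\,\|e^A\mathbf{1}\|_2 \;\leq\; \|\mathbf{1}\|_2^2\,\|e^A\|_2 \;=\; n\,\|e^A\|_2.
\]
Then I would invoke the standard submultiplicativity bound $\|e^A\|_2 \leq e^{\|A\|_2}$, which itself follows from the series expansion and the triangle inequality applied termwise. This covers the directed case too, so the bound is stated at the right level of generality.

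For the refinement $C(A)\leq n\, e^{\lambda_1}$ in the undirected case, the cleanest route is to use the spectral decomposition $A = \sum_{k=1}^n \lambda_k \mathbf{v}_k \mathbf{v}_k^T$ of the symmetric adjacency matrix, which gives $e^A = \sum_k e^{\lambda_k}\mathbf{v}_k\mathbf{v}_k^T$ and hence
\[
C(A) \;=\; \mathbf{1}^T e^A \mathbf{1} \;=\; \sum_{k=1}^n e^{\lambda_k}(\mathbf{v}_k^T\mathbf{1})^2 \;\leq\; e^{\lambda_1}\sum_{k=1}^n (\mathbf{v}_k^T\mathbf{1})^2 \;=\; e^{\lambda_1}\|\mathbf{1}\|_2^2 \;=\; n\, e^{\lambda_1},
\]
using that the $\mathbf{v}_k$ form an orthonormal basis. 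Alternatively, one can simply note that for a symmetric nonnegative matrix, $\|A\|_2 = \rho(A) = \lambda_1$ by Perron--Frobenius, so the refined inequality is a direct specialization of the general one. Neither step is really an obstacle; the only care needed is to justify $\|e^A\|_2 \leq e^{\|A\|_2}$ in the general (possibly nonsymmetric) case, which is the one place where I would cite the standard series-based argument explicitly rather than the spectral shortcut.
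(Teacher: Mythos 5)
Your proof is correct and follows essentially the same route as the paper: the lower bound via nonnegativity of the entries of $e^A$, and the upper bound via Cauchy--Schwarz applied to $\langle e^A\mathbf{1},\mathbf{1}\rangle$ together with $\|e^A\|_2\le e^{\|A\|_2}$, specializing to $\lambda_1=\|A\|_2$ in the symmetric case. Your extra spectral-decomposition derivation of the undirected bound and the explicit Perron--Frobenius justification are harmless additions but not a different argument.
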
 
 \begin{proof}
 The lower bound is trivial, as $$EE(A) = \sum_{i=1}^n \left 
[e^A\right]_{ii} \leq \sum_{i=1}^n \sum_{j=1}^n \left [e^A\right]_{ij} 
= \sum_{i=1}^n \left [e^A{\bf 1}\right ]_i = C(A).$$ The upper 
bound follows from noticing that $C(A) = {\bf 1}^Te^A{\bf 1} = 
\left(e^A{\bf 1}\right)^T{\bf 1} = \langle e^A{\bf 1}, {\bf 1} 
\rangle$ and applying the Cauchy--Schwarz inequality to the quadratic 
form $\langle e^A{\bf 1}, {\bf 1} \rangle$:  $$|\langle e^A{\bf 1}, {\bf 1} \rangle | 
\leq \|e^A{\bf 1}\|_2 \|{\bf 1}\|_2 \leq \|e^A\|_2 \|{\bf 1}\|_2 
\|{\bf 1}\|_2 \leq n\, e^{\|A\|_2}.$$
For an undirected network $A$ is symmetric and $\lambda_1 = \|A\|_2$.
\end{proof}

Note that
the lower bound is attained in the case of the \lq\lq empty\rq\rq\  graph with 
adjacency matrix $A =  0$, while the upper bound is attained on the 
complete graph, whose adjacency matrix is $A = {\bf 1}{\bf 1}^T - I$.  

The bounds from Proposition \ref{prop:tnc_exp} also hold 
for $e^{\beta A}$, $\beta > 0$.  For any connected graph with 
adjacency matrix $A$, the bounds get tighter as $\beta \rightarrow 0+$, 
since both the lower and upper bound tend to 1.
The parameter $\beta$ can be interpreted 
as an {\em inverse temperature} and is a reflection of external 
disturbances on the network (see, e.g., \cite{EHB11} for details);
taking $\beta \rightarrow 0+$ is equivalent to \lq\lq raising
the temperature\rq\rq\ of the environment surrounding the network.

When appropriately normalized,  
$C(A)$ can be used to compare 
the ease of information exchange on different networks. This
could be useful, for instance, in the design of communication networks.
In the following sections we compute the total communicability
for various types of networks. The question arises of what would constitute a 
reasonable normalization factor. There are several possibilities. Normalizing
$C(A)$ by the number $n$ of nodes corresponds to the average 
total communicability of the network per node. Similarly,
normalizing $C(A)$ by the number $m$ of edges would correspond to the
average total communicability of the network per edge. We note also
that the minimum value of $C(A)$ is $n$, corresponding to the empty
graph on $n$ nodes ($V=\emptyset$), while the maximum value is
$n^2e^{n-1}-n$, corresponding to the complete graph on $n$ nodes.
The expression 
$$\hat C(A) := \frac{C(A)-n}{n^2e^{n-1}-2n}$$
takes its values in the interval $[0,1]$, with $\hat C(A) =0$ for \lq\lq empty\rq\rq\ 
graphs (no communication can take place on such graphs) and $\hat C(A) =1$ on
complete graphs (for which the ease of communication between nodes is
clearly maximum). Unfortunately, the denominator in this expression grows so fast
that for most sparse graphs evaluating $\hat C(A)$ results in underflow. 

In the experiments below we chose to normalize $C(A)$ by $n$, the number
of nodes, and by $m$, the number of edges; for the network used in our
tests we found that comparing networks based on $C(A)/n$ or on $C(A)/m$ yields exactly
the same rankings, therefore we only include results for the former measure.

\section{Computational studies}
\label{sec:experiments}

In this section we carry out extensive centrality computations for a variety
of networks, with the aim of comparing subgraph centrality with total 
communicability centrality. In particular, we are interested in determining
if, or for what type of networks, the two centrality measures provide similar 
rankings. Moreover, for those networks where the two measures result in rankings
that differ significantly, we would like to obtain some insights on why this is
the case. Of course it would be desirable to know when one measure should be preferred
to the other, but this is a difficult problem since it is not easy to come up
with objective criteria for comparing ranking methods (see the discussion in
\cite[Chapter 16]{Whos}).  We will compare the two methods in terms
of computational cost in section \ref{sec:approx}.

To measure similarities between the rankings obtained with the two methods
we use (Pearson) correlation coefficients and the intersection distance method (see 
\cite{Faginetal} as well as
\cite{Boldi,Randomalpha}) on both the full set $V$ of nodes and on partial lists of nodes.  
The correlation coefficients are computed using lists of nodes in rank order.  The intersection 
distances are computed using the lists of subgraph centrality and total communicability values.  Given two ranked lists $x$ and $y$, the intersection distance between the two lists is computed in the following way: let $x_k$ and $y_k$ be the top $k$ ranked items in $x$ and $y$ respectively.  Then the
top $k$ intersection distance (or {\em intersection similarity}) is given by 
$${\rm isim}_k(x,y) := \frac{1}{k}\sum_{i=1}^k \frac{|x_i\Delta y_i|}{2i}$$
where $\Delta$ is the symmetric difference operator between the two sets.  If the lists are identical, then ${\rm isim}_k(x,y) = 0$ for all $k$.  If the two sequences are disjoint, then ${\rm isim}_k = 1$.
We denote by $\mathpzc{cc}$ the correlation coefficient between the two vector 
rankings, and by $\mathpzc{cc}_p$ the correlation coefficient between 
the top $p$\% of nodes under the two ranking systems. We denote by 
${\rm isim}_{p\%}$ the intersection distance between the top $p\%$ of nodes. 

Unless otherwise specified, all experiments were performed using Matlab version
7.9.0 (R2009b) on a MacBook Pro running OS X Version 10.6.8, a 2.4 GHZ Intel Core 
i5 processor and 4 GB of RAM. In this section, we use the Matlab built-in
function $\tt expm$ for computing the matrix exponential.

\subsection{Test matrices}
\label{sec:tests}
The synthetic examples used in the tests were produced using the CONTEST toolbox 
in Matlab \cite{contest,contest09}.  The graphs tested
were of two types: preferential attachment (Barab\'asi--Albert) model 
and small world (Watts--Strogatz) model. In CONTEST, these graphs 
and the corresponding adjacency matrices can be
built using the functions {\tt pref} and {\tt smallw}, respectively.

The preferential attachment model was designed to produce 
networks with scale-free degree distributions as well as the small 
world property \cite{prefattach}.  In CONTEST, preferential 
attachment networks are constructed using the command {\tt pref(n,d)} 
where $n$ is the number of nodes in the network and $d \geq 1$ is the 
number of edges each new node is given when it is first introduced to 
the network.  The network is created by adding nodes one by one (each 
new node with $d$ edges).  The edges of the new node connect to nodes 
already in the network with a probability proportional to the degree 
of the already existing nodes.  This results in a scale-free degree 
distribution.  Note that with this construction, the minimum degree 
of the network is $d$.
When $d>1$ this means that the network has 
no dangling nodes (nodes of degree 1), whereas in many real-life networks
one often observes a high number of dangling nodes.
In the CONTEST toolbox, the default value is $d=2$.

\begin{table}[t!]
\centering
\caption{Comparison, using the correlation coefficient, 
of rankings based on the diagonal entries 
and row sums of $e^A$ for 1000-node scale-free  
networks of various parameters built using the {\tt pref} function 
in the CONTEST Matlab toolbox.  The values reported are the 
averages over 20 matrices with the same parameters.  The 
parameter $d$ is the initial degree of nodes in the network 
(and consequently the minimum degree of the network).}
\begin{tabular}{|c|c|}
\hline
$d$  & $\mathpzc{cc}$ \\
 \hline
 \hline
 1 & 0.224 \\
 \hline
 2 & 0.343 \\
\hline
 3 & 0.517 \\
\hline
4 & 0.905 \\
\hline
5 & 0.993 \\
\hline
6 & 0.999 \\
\hline
7 & 0.999 \\
\hline
$\geq 8$ & 1 \\
\hline
\end{tabular}
\label{tbl:pref_exp_compare}
\end{table}

\begin{table}[t!]
\centering
\caption{Intersection distance comparisons of rankings based on the diagonal entries 
and row sums of $e^A$ for 1000-node scale-free  
networks of various parameters built using the {\tt pref} function 
in the CONTEST Matlab toolbox.  The values reported are the 
averages over 20 matrices with the same parameters.  The 
parameter $d$ is the initial degree of nodes in the network 
(and consequently the minimum degree of the network).}
\begin{tabular}{|c|c|c|}
\hline
$d$  & isim & ${\rm isim}_{10\%}$  \\
 \hline
 \hline
 1 & 0.174 & 0.199 \\
 \hline
 2 & 0.036 & 0.031 \\
\hline
 3 & 0.003 & 0.005 \\
\hline
4 & 2.04e-4 & 2.79e-4 \\
\hline
5 & 1.30e-5 & 1.71e-5 \\
\hline
6 & 9.83e-7 & 0 \\
\hline
7 & 4.93e-7 & 0 \\
\hline
$\geq 8$ & 0 & 0 \\
\hline
\end{tabular}
\label{tbl:pref_exp_compare_isim}
\end{table}

In our experiments, we tested various values of $d$ on a network of 
size $n=1000$: twenty networks were tested for all values 
$1 \leq d \leq 10$, as well as all a few larger values.  
In Table \ref{tbl:pref_exp_compare}, the averages of the correlation 
coefficients between the subgraph centrality rankings and the total 
subgraph communicability rankings can be found for various values of $d$.  
The intersection distance values can be found in Table \ref{tbl:pref_exp_compare_isim}.  The intersection distance values were calculated both for the full set of rankings and for the top 10\% of ranked nodes. 

The results show that correlation between the two metrics increases and the intersection distance value decreases quickly with the value of the parameter $d$. 
The intersection distance values for the top 10\% of nodes are very close to those for the complete set of nodes. For sufficiently dense networks,
the two measures provide essentially identical rankings, producing correlation coefficients close to 1 and intersection distances close to 0.

A second class of synthetic test matrices used in our experiments
corresponds to small-world networks (Watts--Strogatz model). 
The small world model was developed as a way to impose a high clustering coefficient 
onto classical random graphs \cite{smallw}.  The name 
comes from the fact that, like classical random graphs, the Watts--Strogatz 
model produces networks with the small world (that is, small graph diameter) property.  
To build these matrices, the input is {\tt smallw(n,d,p)} 
where $n$ is the number of nodes in the network, which are arranged in 
a ring and connected to their $d$ nearest neighbors on the ring.  
Then each node is considered independently and, with probability $p$, 
a link is added between the node and one of the other nodes in the network, 
chosen uniformly at random.  At the end of this process, all loops and 
repeated edges are removed.  For this set of experiments, the size of 
the network was fixed at $n=1000$ and the probability of an extra 
link was left at the default value of $p=0.1$ while $d$ was varied.  

The values of $d$ tested were: all values $1 \leq d \leq 10$, along with 
all multiples of 10 up to 200.  In each case, twenty networks were created 
with each value of $d$.  The average correlation coefficients between 
the subgraph centrality rankings and the total communicability 
rankings are given in Table \ref{tbl:smallw_exp_compare}. As before, the correlation 
coefficients were computed between the complete sets of rankings.  
The intersection distances, reported in Table \ref{tbl:smallw_exp_compare_isim},
 were computed on both the complete sets of 
rankings and the top 10\% of ranked nodes. 

\begin{table}[t!]
\centering
\caption{Comparison, using the correlation coefficient, of 
rankings based on the diagonal entries and row sums 
of $e^A$ for 1000-node small world networks of various parameters made 
using the {\tt smallw} function in the CONTEST Matlab toolbox.  
The values reported are the average over 20 matrices with the same parameters.}
  \subfloat[][]{\begin{tabular}{|c|c|}
\hline
$d$  & $\mathpzc{cc}$ \\
 \hline
 \hline
1 & 0.177 \\
\hline
2  & 0.089 \\
\hline
 3 & 0.037 \\
\hline
4 & 0.033 \\
\hline
5 & 0.031 \\
\hline
6 & 0.048 \\
\hline
7 & 0.039 \\
\hline
8 & 0.046 \\
\hline
9 & 0.031 \\
\hline
10 & 0.054 \\
\hline\end{tabular}}
\qquad
  \subfloat[][]{\begin{tabular}{|c|c|}
\hline
$d$  & $\mathpzc{cc}$ \\
 \hline
\hline
20  & 0.156 \\
\hline
 30 & 0.222 \\
\hline
40 & 0.240 \\
\hline
50 & 0.310 \\
\hline
60 & 0.426 \\
\hline
70 & 0.431 \\
\hline
80 & 0.747 \\
\hline
90 & 0.926 \\
\hline
100 & 0.997 \\
\hline
$\geq 110$ & 1 \\
\hline
\end{tabular}}
\label{tbl:smallw_exp_compare}
\end{table}

\begin{table}[t!]
\centering
\caption{Intersection distance comparison of rankings based on the diagonal entries and row sums 
of $e^A$ for 1000-node small world networks of various parameters made 
using the {\tt smallw} function in the CONTEST Matlab toolbox.  
The values reported are the averages over 20 matrices with the same parameters.}
  \subfloat[][]{\begin{tabular}{|c|c|c|c|c|}
\hline
$d$  & isim & ${\rm isim}_{10\%}$  \\
 \hline
 \hline
1 & 0.015 & 0.071 \\
\hline
2  & 0.056 & 0.160 \\
\hline
 3 & 0.089 & 0.252 \\
\hline
4 & 0.117 & 0.350 \\
\hline
5 & 0.151 & 0.479 \\
\hline
6 & 0.178 & 0.621 \\
\hline
7 & 0.218 & 0.709 \\
\hline
8 & 0.243 & 0.731 \\
\hline
9 & 0.262 & 0.705 \\
\hline
10 & 0.284 & 0.725 \\
\hline\end{tabular}}
\qquad
  \subfloat[][]{\begin{tabular}{|c|c|c|c|c|}
\hline
$d$  & isim & ${\rm isim}_{10\%}$  \\
 \hline
\hline
20  & 0.311 & 0.713  \\
\hline
 30 & 0.239 & 0.535  \\
\hline
40 & 0.133 & 0.351 \\
\hline
50 & 0.111 & 0.214 \\
\hline
60 & 0.039 & 0.120 \\
\hline
70 & 0.014 & 0.041 \\
\hline
80 & 0.002 & 0.007  \\
\hline
90 & 1.71e-4 & 4.05e-4  \\
\hline
100 & 5.88e-6 & 1.09e-5 \\
\hline
$\geq 110$ & 0 & 0 \\
\hline
\end{tabular}}
\label{tbl:smallw_exp_compare_isim}
\end{table}

It is evident from these results that for this class of small world
networks, the similarity between the two ranking measures is much
weaker than for the preferential attachment model, at least as long
as the networks remain fairly sparse. 
The intersection distances are also relativelt large, further indicating 
that the two measures are much more weakly related than in the case of 
the preferential attachment model. For some values of $d$, the intersection 
distance between the top 10\% of nodes is above 0.7, indicating that 
there is little consistency among the rankings of the top 10\% of nodes under 
the two measures.  As the networks become increasingly
dense, however, the correlation between the two measures becomes
stronger and the intersection distance eventually decreases. 

\subsection{Total communicability in small world networks}
For networks with low connectivity (or high locality), the total network communicability 
can be expected to be low compared with networks with higher connectivity.
For instance, on a 5000 node ring lattice, 
the total network 
communicability is $C(A) =$ 3.69e04 and the normalized $C(A)$ is 7.4.  
However, when even a few shortcuts are added across the lattice using the 
Watts--Strogatz small world model, this value jumps considerably.  
If the probability of a shortcut is $p=0.1$, the normalized total 
network communicability (averaged over 20 networks created using 
input {\tt smallw(5000,1, p)}) is 9.7.  If the probability of a 
shortcut is increased to $p=0.2$, the normalized total network 
communicability increases to 12.4.  These and additional
results can be found in Table \ref{tbl:avgtnc} and Fig.~\ref{fig:norm_C(A)}. 

\begin{table}
\centering
\caption{Comparison of the total network communicability $C(A)$ of a 
ring lattice and small world rings with increasing probability of a shortcut. 
The computed values 
were averaged over 20 instances.}
\begin{tabular}{|c|c|c|c|}
\hline
Graph & number of edges & $C(A)$ & normalized $C(A)$ \\
 \hline
 \hline
5000 node ring lattice & 5000 & 3.69e04 & 7.4 \\
 \hline
 {\tt smallw(5000,1,.1)} & 5492 & 4.83e04 & 9.7 \\
 \hline
 {\tt smallw(5000,1,.2)} & 6222 & 6.22e04 & 12.4 \\
 \hline
  {\tt smallw(5000,1,.3)} & 6495 & 7.92e04 & 15.8 \\
 \hline
  {\tt smallw(5000,1,.4)} & 6990 & 9.90e04 & 19.8 \\
 \hline
  {\tt smallw(5000,1,.5)} & 7496 & 1.24e05 & 24.8 \\
 \hline
  {\tt smallw(5000,1,.6)} & 7999 & 1.53e05 & 30.6 \\
 \hline
\end{tabular}
\label{tbl:avgtnc}
\end{table}

\begin{figure}[t!]
\centering
\includegraphics[width=0.6\textwidth,height=0.46\textwidth]{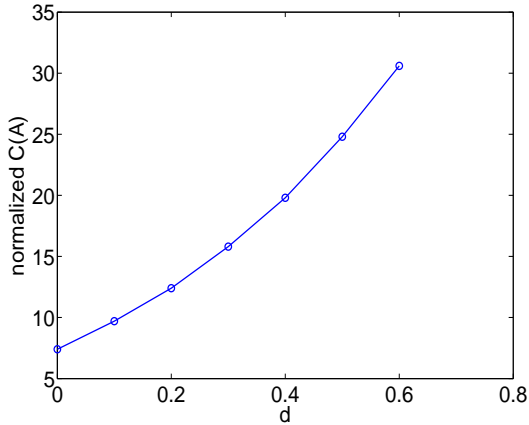}
\caption{Plot of of the total network communicability $C(A)$ for small world graphs 
with increasing probability $d$ of a shortcut. The computed values were averaged over 20 instances.}
\label{fig:norm_C(A)}
\end{figure}

\subsection{Discussion of test results using synthetic data}

The results reported so far can be explained as follows.
In a (regular) ring-shaped network, no node is more central than
the other nodes and no reasonable centrality measure would be
able to assign a (strict) ranking of the nodes.  
In a small world network obtained by perturbing a regular
ring-shaped network, all the nodes have {\em approximately} the same 
importance, with the nodes with extra links (``shortcuts'') being 
slightly more important than the others.  When $d$ is small, these 
shortcuts matter 
more, but the subgraph centrality scores and the total 
communicability scores do not have a large range.  Due to this, 
the change in the scores due to moving from the subgraph 
centrality measure to the total communicability measure can 
have a high impact on node rankings.  This leads to a low correlation and a 
relatively large intersection 
distance  
between the two rankings.  When $d$ gets very large, the shortcuts 
matter less and cause less perturbations between the two sets of rankings. 
By contrast, in a scale-free preferential attachment network 
both the subgraph centrality scores and total 
communicability scores are spread out over a large range, 
even for small $d$, and adding the corresponding
off-diagonal row sums to the
diagonal entries does not change the rankings as much.

\subsection{Real data}
\label{sec:realdata}
Next, we study correlations between the two ranking methods 
using various networks corresponding to real data.  The networks 
in this section come from a variety of sources.  The Zachary Karate 
Club network is a classic example in network analysis \cite{karate}. 
The Intravenous 
Drug User and Yeast PPI networks were provided to us by Prof.~Ernesto 
Estrada. The Yeast PPI network has 440 ones on the diagonal due to the 
self-interactions of certain proteins.  
The remainder of the networks can be found in the University 
of Florida Sparse Matrix Collection \cite{UFsparse} under different
\lq\lq groups\rq\rq. 
The Erd\"os networks are from the Pajek group. They represent various 
subnetworks of the Erd\"os collaboration network. The ca-GrQc and 
ca-HepTh from the SNAP group are collaboration networks for the 
arXiv General Relativity and High Energy Physics Theory subsections, 
respectively. The as-735 network, also from the SNAP group, contains 
the communication network of a group of Autonomous Systems (AS) measured 
over 735 days between November 8, 1997 
and January 2, 2000.  Communication occurs when routers from two Autonomous 
Systems exchange information.  The Minnesota network from the Gleich 
group represents the Minnesota road network. 
The order $n$ and number of nonzeros $nnz$ of the corresponding adjacency 
matrices are given in Table
\ref{tbl:real_exp_compare}.  These networks exhibit a wide variety 
of structural properties and together constitute a 
rather heterogeneous sample of real-world networks. 
All networks except the Yeast PPI network are simple and all are undirected.

\begin{table}
\centering
\caption{Comparison of rankings based on the diagonal and row sum of 
$e^A$ for various real-world networks.}
\begin{tabular}{|c|c|c|c|c|c|c|c|}
\hline
Graph & $n$ & $nnz$ & $\lambda_1$ & $\lambda_2$ & $\mathpzc{cc}$ & $\mathpzc{cc}_{10}$ & $\mathpzc{cc}_{1}$ \\
 \hline
 \hline
Zachary Karate Club & 34 & 156 & 6.726 & 4.977 & 0.420 & -- & 1 \\
 \hline
 Drug User & 616 & 4024 & 18.010 & 14.234 & 0.083 & 0.976 & 1 \\
 \hline
 Yeast PPI & 2224 & 13218 & 19.486 & 16.134 & 0.108 & -- & 1 \\
 \hline
Pajek/Erdos971 & 472 & 2628 & 16.710 & 10.199 & 0.523 & 1 & 1 \\
\hline
Pajek/Erdos972 & 5488 & 14170 & 14.448 & 11.886 & 0.122 & -- & -- \\
\hline
Pajek/Erdos982 & 5822 & 14750 & 14.819 & 12.005 & 0.128 & -- & -- \\
\hline
Pajek/Erdos992 & 6100 & 15030 & 15.131 & 12.092 & 0.143 & -- & -- \\
\hline
SNAP/ca-GrQc & 5242 & 28980 & 45.617 & 38.122 & 0.021 & -- & 0.995 \\
\hline
SNAP/ca-HepTh & 9877 & 51971 & 31.035 & 23.004 & 0.007 & -- & -- \\
\hline
SNAP/as-735 & 7716 & 26467 & 46.893 & 27.823 & 0.904 & 0.771 & 1  \\
\hline
Gleich/Minnesota & 2642 & 6606 & 3.2324 & 3.2319 & 0.087 & -- & -- \\
 \hline
\end{tabular}
\label{tbl:real_exp_compare}
\end{table}

\begin{table}
\centering
\caption{Intersection distance comparison of rankings based on the diagonal and row sum of 
$e^A$ for various real-world networks.} 
\begin{tabular}{|c|c|c|c|c|c|c|c|}
\hline
Graph & isim & ${\rm isim}_{10\%}$ & ${\rm isim}_{1\%}$ \\
 \hline
 \hline
Zachary Karate Club & 0.044 & 0.111 & 0 \\
 \hline
 Drug User & 0.102 & 0.002 & 0 \\
 \hline
 Yeast PPI & 0.025 & 0.056 & 0 \\
 \hline
Pajek/Erdos971 & 0.004 & 0 & 0 \\
\hline
Pajek/Erdos972 & 0.081 & 0.075 & 0.047 \\
\hline
Pajek/Erdos982 & 0.079 & 0.065 & 0.044  \\
\hline
Pajek/Erdos992 & 0.077 & 0.055 & 0.034 \\
\hline
SNAP/ca-GrQc & 0.043 & 0.091 & 5.49e-4  \\
\hline
SNAP/ca-HepTh & 0.142 & 0.319 & 0.134  \\
\hline
SNAP/as-735 & 1.81e-4 & 0.001 & 0  \\
\hline
Gleich/Minnesota & 0.096 & 0.341 & 0.709 \\
 \hline
\end{tabular}
\label{tbl:real_exp_compare_isim}
\end{table}

Table \ref{tbl:real_exp_compare} reports the correlation coefficients 
between the two sets of rankings for all the nodes, the top 10\% of 
the nodes and the top 1\% of the nodes (limited to the cases where the 
two methods rank the same nodes in the top 10\% and top 1\%), 
as well as the value of the
two largest eigenvalues $\lambda_1$ and $\lambda_2$ of the adjacency 
matrix. A ``--" in the table signifies that different lists of top nodes
where produced under the two rankings, hence correlation coefficients
could not be computed in such cases.
Table \ref{tbl:real_exp_compare_isim} reports the intersection distances 
between the two sets of rankings for all, for the top 10\%, and for the top 1\% of the nodes.
Table \ref{tbl:real_tnc} reports the normalized Estrada index 
and normalized total network connectivity for each of the networks. 
For the Zachary Karate Club, which
only has 34 nodes, $\mathpzc{cc}_1 = 1$ and ${\rm isim}_{1\%}=0$ indicate that the top two ranked 
nodes under the two rankings are the same. The top node is node 34, which 
corresponds to the president of the karate club, and the second is node 1, 
which corresponds to the instructor.  These were the two most influential 
members of the club and fought with each other to the point that 
eventually the club split into two factions aligned around each of them 
\cite{karate}. 

The results indicate that there is a good deal of variation 
between the correlation coefficients for these networks.  The correlation 
coefficient between the rankings of all the nodes ranges from a low of 
$0.007$ for the SNAP/ca-HepTh network to a high of $0.904$ for the  
SNAP/as-735 network. Even for
networks that come from similar datasets, the correlation 
coefficients can be very different.  For example, the networks 
in the Pajek group are all subsets of the Erd\"os collaboration 
network, but correlations between the two sets of rankings 
range between 0.122 for the Erdos972 network and 0.583 for 
the Erdos971 network. 

For most of the networks, the correlation coefficient (when defined) increases 
when only the top 1\% of nodes are considered ($\mathpzc{cc}_{1}$), 
sometimes greatly. Five of the networks (Zachary 
Karate Club, Drug User, Yeast PPI, Pajek/Erdos971, and SNAP/as-735) produce the 
exact same rankings on the top 1\% of nodes.  Another network 
(SNAP/ca-GrQc) has a correlation coefficient greater 
than 0.9 on the top 1\% of nodes. 

\begin{table}
\centering
\caption{Comparison of the normalized Estrada index $EE(A)/n$, 
the normalized total network connectivity $C(A)/n$, and 
$e^{\|A\|_2}$ ($=e^{\lambda_1}$) for various real-world networks.}
\begin{tabular}{|c|c|c|c|}
\hline
Graph & normalized $EE(A)$ & normalized $C(A)$ & $e^{\|A\|_2}$  \\
 \hline
 \hline
Zachary Karate Club & 30.62 & 608.79 & 833.81 \\
 \hline
 Drug User & 1.12e05 & 1.15e07 & 6.63e07 \\
 \hline
 Yeast PPI & 1.37e05 & 3.97e07 & 2.90e08 \\
 \hline
Pajek/Erdos971 & 3.84e04 & 4.20e06 & 1.81e07 \\
\hline
Pajek/Erdos972 & 408.23 & 1.53e05 & 1.88e06 \\
\hline
Pajek/Erdos982 & 538.58 & 2.07e05 & 2.73e06 \\
\hline
Pajek/Erdos992 & 678.87 & 2.50e05 & 3.73e06 \\
\hline
SNAP/ca-GrQc & 1.24e16 & 8.80e17 & 6.47e19 \\
\hline
SNAP/ca-HepTh & 3.05e09 & 1.06e11 & 3.01e13 \\
\hline
SNAP/as-735 & 3.00e16 & 3.64e19 & 2.32e20 \\
\hline
Gleich/Minnesota & 2.86 & 14.13 & 35.34 \\
 \hline
\end{tabular}
\label{tbl:real_tnc}
\end{table}

The intersection distance values behave in a similar way, although there is not as much variation in the values.  Among all the nodes, the smallest intersection distance is 1.81e-4 for the as-735 network and the largest is 0.142 for the ca-HepTh network.  These networks also had the largest and smallest correlation coefficients, respectively, for the full set of nodes.  For 5 of the 11 networks examined, the intersection distance value decreases when only the top 10\% of nodes are considered and for all
cases except for the Minnesota road network, it decreases when only the top 1\% of 
nodes are considered. 

It is interesting to note that the similarity between the two ranking 
methods is very different on the ca-GrQc and the ca-HepTh networks.  
The two networks are both arXiv collaboration networks from subsections 
of physics so, intuitively, one would assume that they behaved similarly.  
However, the two rankings are very different on the ca-HepTh network and 
are highly correlated on the ca-GrQc network.  The ca-GrQc network has 
a spectral gap of approximately 7.5 while the spectral gap of ca-HepTh 
is approximately 8, only slightly larger. The relative spectral gaps are
also comparable.  Thus, it is clear that the spectral gap alone cannot
be used to differentiate between the two ranking methods. It appears
that while the two networks are both physics collaboration networks, 
there are significant structural differences between the two groups
which cause the two ranking systems to behave very differently.  
Some insight can be gleaned by looking at the degree distributions of the two 
networks.  Although the ca-HepTh network is almost twice as large as the 
ca-GrQc, the maximum degree on the network is only 65 while the maximum degree on 
the ca-GrQc network is 81.  See Fig.~\ref{fig:collab_degreedist} for the 
degree distributions of the two networks. Additionally, the total 
communicability scores achieved by nodes in the ca-GrQc network range 
from 2.7 to 8.5e19 (the subgraph centrality scores range from 1.5 to 1.6e18).  
In contrast, even with many more nodes, the total communicability scores of 
the ca-HepTh network have a smaller range, from 2.7 to 3.2e13 (the subgraph centrality 
scores range from 1.5 to 9.7e11). It appears that
the wider range of scores in the ca-GrQc 
network helps to prevent rankings from being changed when the scores are perturbed 
by the addition of off-diagonal communicabilities.   
This can be observed when looking at the intersection distances 
between the two sets of rankings on the networks, which are plotted in Fig.~\ref{fig:collab_isim}.  Overall, the intersection distances are much lower for the ca-GrQc network than for 
the ca-HepTh network.  Additionally, for $k\leq34$, ${\rm isim}_k$(ca-GrQc)$=0$, indicating that the first 34 nodes are ranked exactly the same. In contrast,  ${\rm isim}_k$(ca-HepTh)$=0$ only for $k\leq 5$, after which there is a large jump in the intersection distances.

\begin{figure}[t!]
\centering
\includegraphics[width=0.46\textwidth,height=0.46\textwidth]{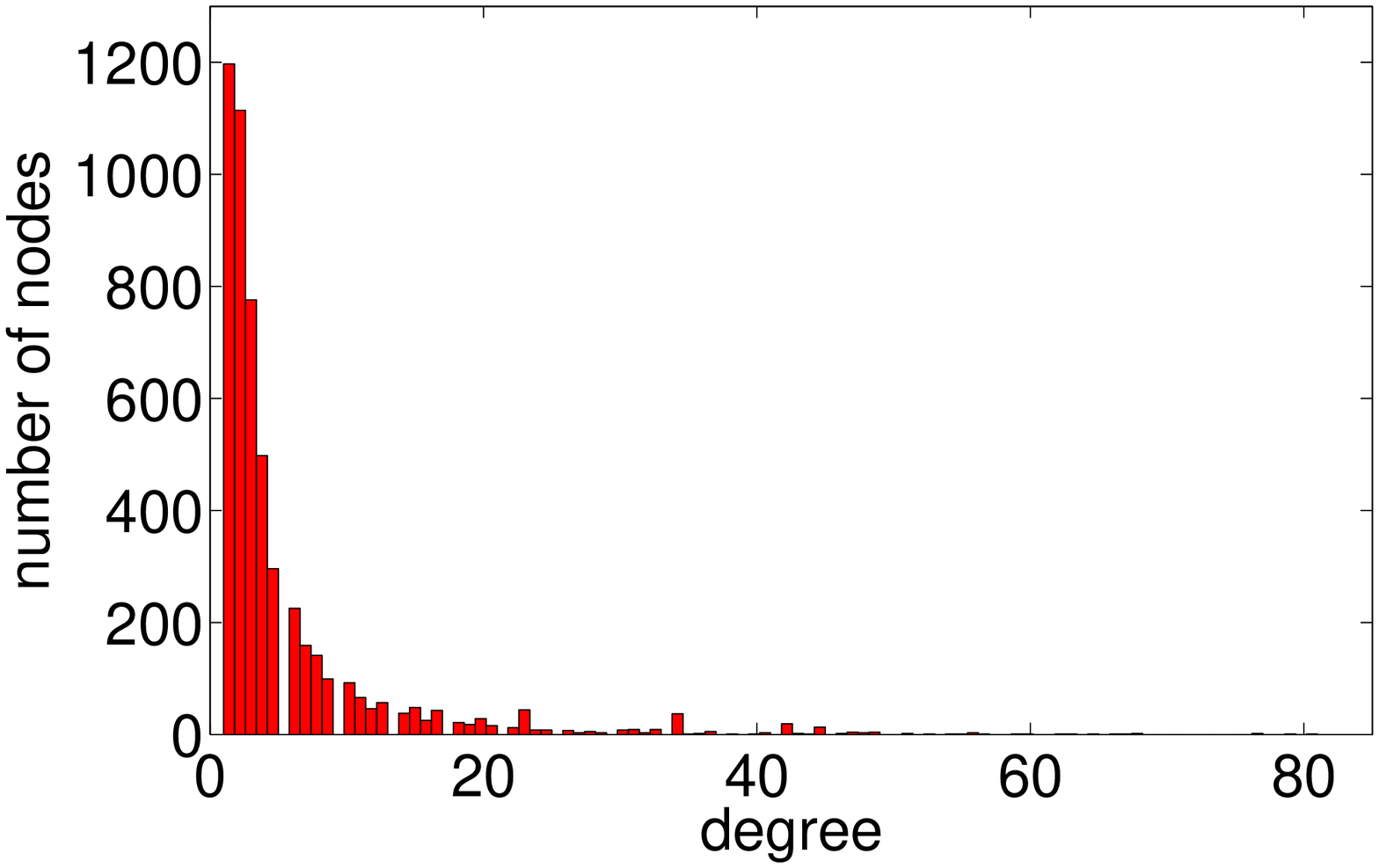}
\hspace{0.3in}
\includegraphics[width=0.46\textwidth,height=0.46\textwidth]{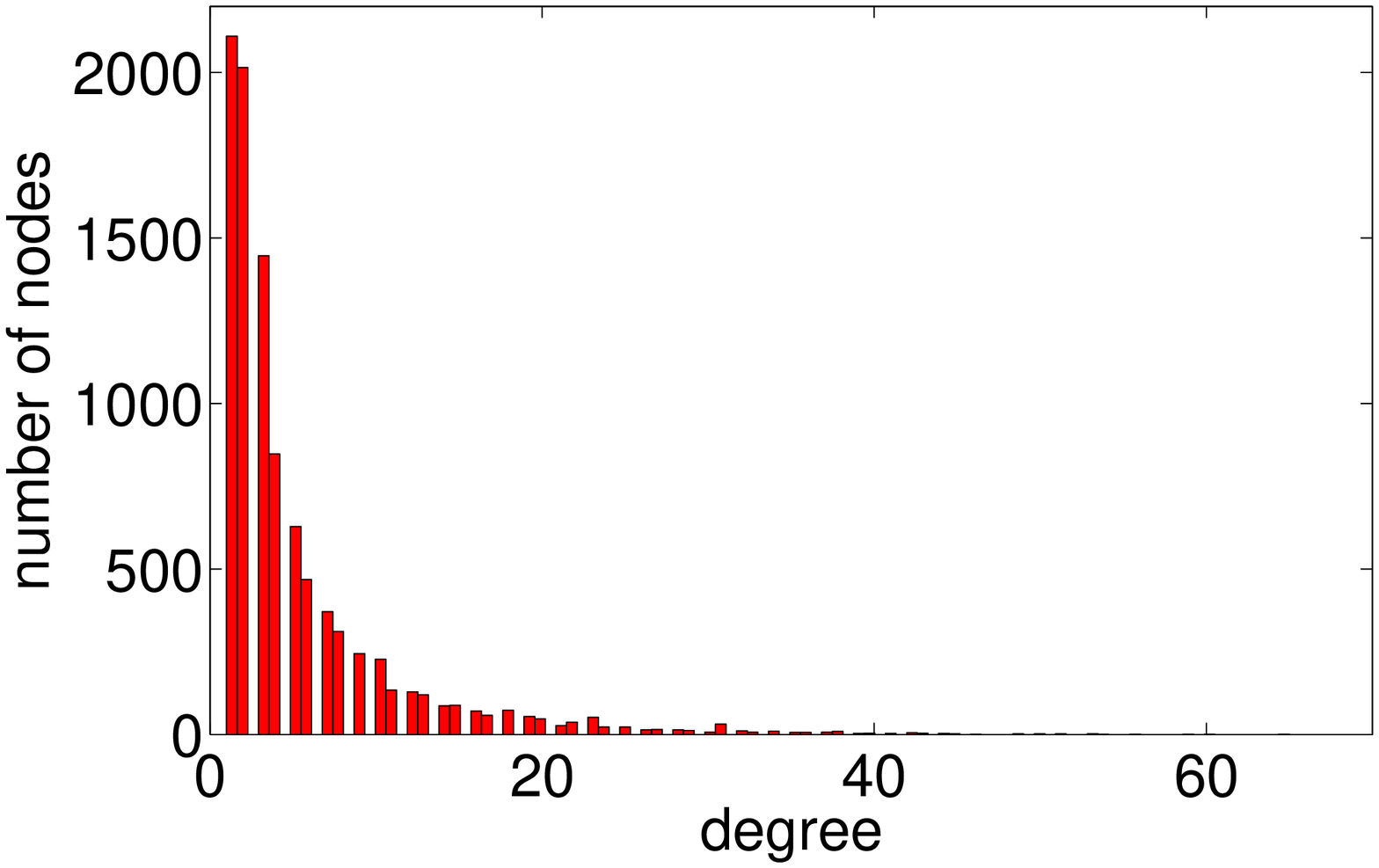}
\caption{The degree distributions of the ca-GrQc (left) and the ca-HepTh (right) collaboration networks.}
\label{fig:collab_degreedist}
\end{figure}

\begin{figure}[t!]
\centering
\includegraphics[width=0.46\textwidth,height=0.46\textwidth]{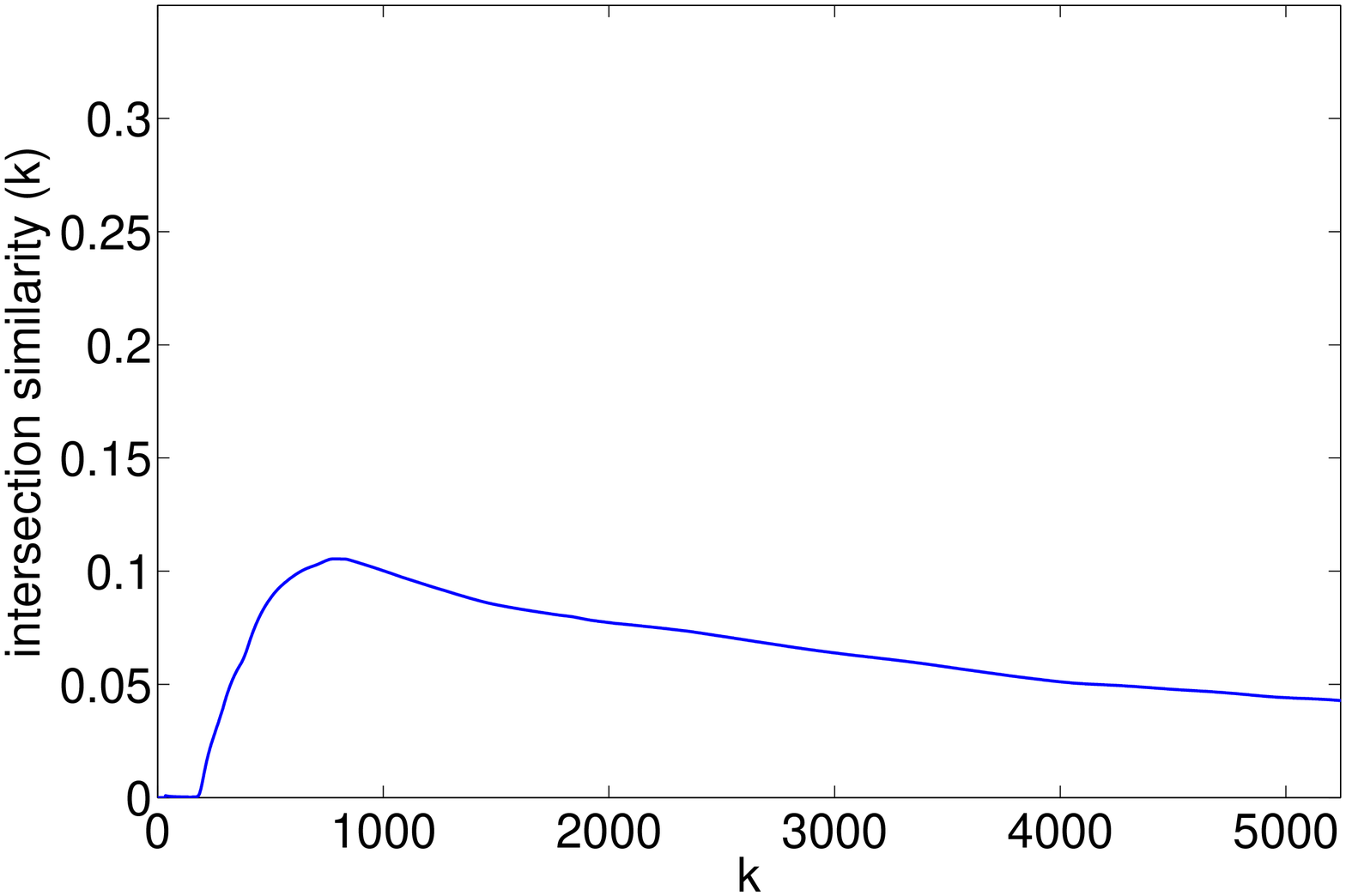}
\hspace{0.3in}
\includegraphics[width=0.46\textwidth,height=0.46\textwidth]{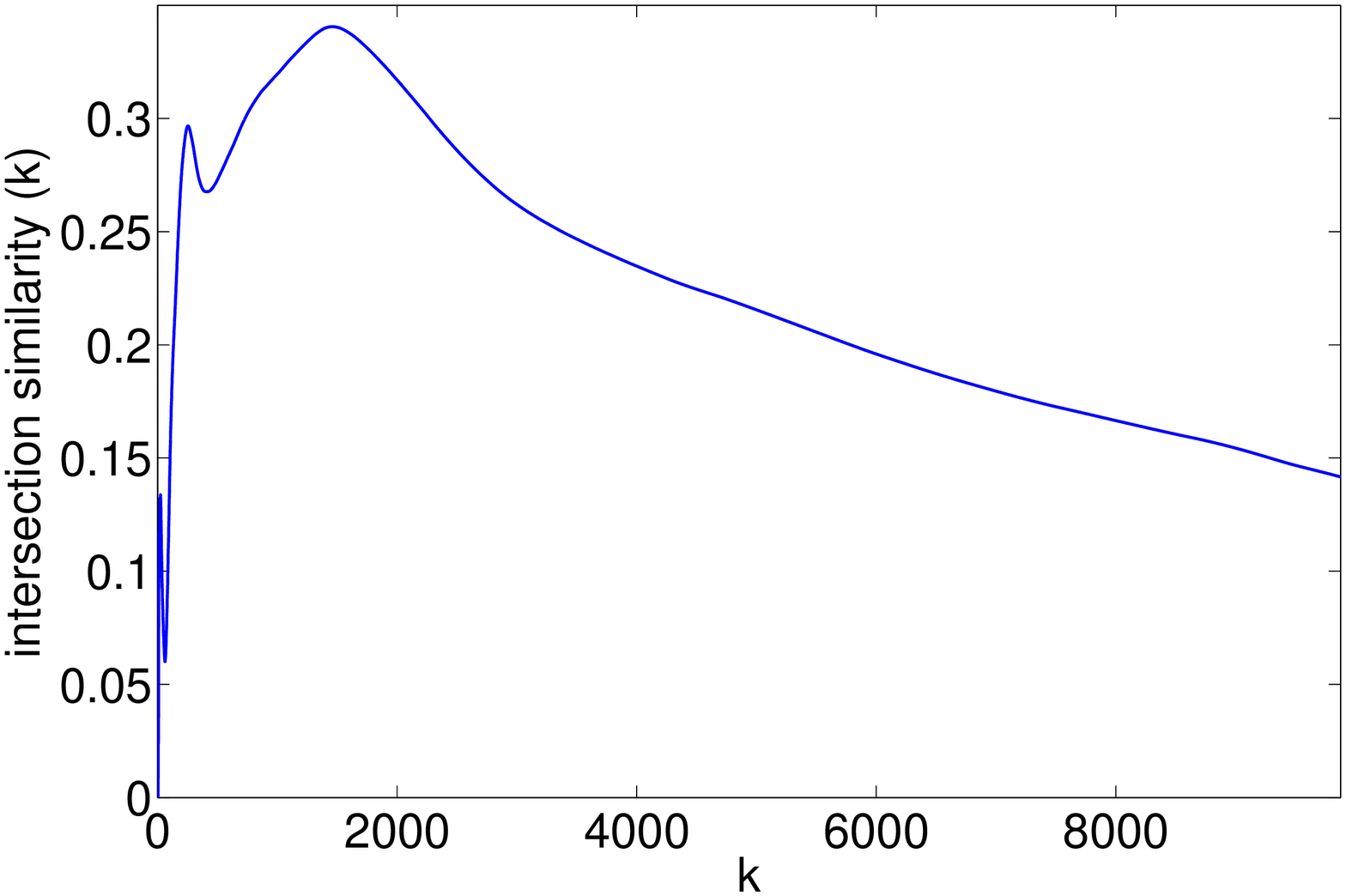}
\caption{The intersection distance values (${\rm isim}_k$) of the ca-GrQc (left) and the ca-HepTh (right) collaboration networks.}
\label{fig:collab_isim}
\end{figure}

Similar behavior can be observed on the various instances of the Erd\"os 
collaboration network. Erdos971, which is very small, shows a high 
correlation between the two rankings; indeed, the rankings of the 
top 10\% of nodes are exactly the same.  On the other instances 
of the collaboration network, however, the rankings are 
somewhat different, as can be seen from the relatively low values of
the correlation
coeffcients. 
The intersection distance values, while not very high, are 
somewhat higher than for most other networks.
The maximum 
subgraph centrality and total communicability scores of the Erdos972 
network are the smallest of any of the Erd\"os collaboration subgraphs.  
The maximum subgraph centrality score is 1.18e05 and the maximum total 
centrality score is 9.20e06.  By comparison, on the (much smaller) 
Erdos971 network, the maximum subgraph centrality score is 1.11e06.  
On the Erdos982 network, the maximum subgraph centrality score is 
1.71e05 and on the Erdos992 network it is 2.47e05.  Although the 
top 5 nodes of the Erdos972 network are exactly the same under the 
two ranking schemes, the relatively narrow range of possible scores  
means that the addition of off-diagonal values to the diagonal ones
perturbs the rankings of the other nodes so much as to 
result in 
a relatively high value of the intersection distance among the top 1\% of nodes.

As before, the spectral gap for these networks does not give
much insight into the behavior of the two ranking schemes, unless
it is really large; the largest spectral gap for this set of
test problems occur for SNAP/as-735, and indeed here we
observe a strong correlation and a small intersection distance between the two metrics. 
Conversely, for the (planar, fairly regular) Gleich/Minnesota 
network, the spectral gap is smallest and not surprisingly
the correlation is very weak and the intersection distance 
for the top 1\% of the nodes, ${\rm isim}_{1\%}$, is very 
high at 0.709. 

When examining the (normalized) total network connectivities of the 
various networks (see Table \ref{tbl:real_tnc}), it can be seen 
that the ease of information sharing across the networks varies 
widely.  Some networks, such as the collaboration networks 
ca-HepTh and ca-GrQc, have a high normalized $C(A)$ (8.80e17 and 
1.06e11, respectively). The value is even higher for the SNAP/as-735
router network ($C(A)/n = $3.64e19). 
 The Minnesota road network, on the other hand, 
has a normalized $C(A)$ of only 14.13, indicating that the network 
is relatively poorly connected, as one would expect in a graph
characterized by wide diameter, small bandwidth and high locality.

\subsection{Identification of essential proteins in PPI network of yeast}

One important application of node centrality measures is to 
rank nodes in protein-protein interaction networks (PPIs) in 
an attempt to determine which proteins are essential, in the
sense that their removal would result in the death of the cell.  
The goal of such rankings is for as many of the top-ranked 
nodes as possible to correspond to essential proteins.  In 
\cite{Proteomics}, various centrality measures were tested 
on their ability to identify essential proteins in the 
Yeast PPI network.  It was shown that, among the centrality 
measures tested, subgraph centrality identified the highest 
percentage of essential proteins ranked in the top 30 nodes, 
identifying 18 essential proteins (in \cite{Proteomics}, subgraph 
centrality was said to identify 19 essential proteins, but 
this was later corrected \cite{EstradaPC}).
When total communicability is used 
instead, the top 30 nodes are the same, so the same percentage 
of essential proteins are identified.  
The intersection distances between the two sets of rankings 
are displayed in Fig.~\ref{fig:yeast_isim}.  Here, it can be seen that the intersection 
distances are small for approximately the top 50 nodes, then they begin to rise.  
The two rankings are least similar for nodes ranked 200-500, then their similarity 
increases again.  As already noted, total 
communicability rankings can be calculated much more quickly 
than subgraph centrality rankings (see also section \ref{sec:approx}).  
Although there are currently methodologies which do better in 
protein ranking (see \cite{Proteome} for example), our
findings suggest
that total communicability does provide valuable 
information about the relative importance of nodes in the network. 

\begin{figure}[t!]
\centering
\includegraphics[width=0.6\textwidth,height=0.46\textwidth]{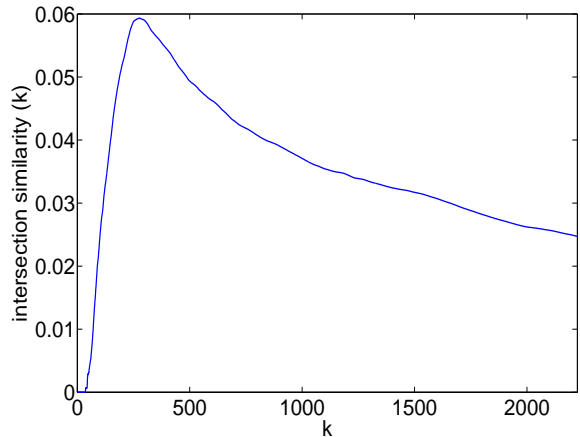}
\caption{The intersection distance values (${\rm isim}_k$) of the Yeast PPI network.}
\label{fig:yeast_isim}
\end{figure}

\subsection{Further discussion of test results using real networks}

The results just described indicate that in general the two
centrality measures can produce significantly different 
rankings, even when one restricts the attention to the
top 1\% of nodes, and even for networks belonging to the
same \lq\lq family\rq\rq.  As in the case of synthetic
networks, a wider range of values in the two sets of
centralities leads to stronger correlations between the
corresponding rankings than in the case of a narrow range.

Two extreme cases are represented by the 
SNAP/as-735 and Gleich/Minnesota data sets. The first one exhibits
a large value of the spectral gap, and thus (as expected)    
a strong correlation between the two rankings; the second
one has tiny spectral gap and results
in very weakly correlated rankings.
For networks that fall somewhere in between these two
extremes, the observed correlation coefficients can vary
significantly. The subgraph centrality scores measure how 
``well-connected'' a node is in the network as a whole while 
the communicability score between nodes $i$ and $j$ measures how 
well information travels between node $i$ and node $j$. Thus, 
the total communicability of node $i$ is a measure of how well 
information travels between node $i$ and any node in the network
(node $i$ itself included).  
Although these two measures are closely related, they are not 
quite the same.  This observation suggests that the two 
centrality measures
reflect somewhat different structural properties of the
networks. Thus, they should be applied in concert rather than
in alternative of one another, unless computational considerations
dictate otherwise.

\section{Computational aspects}
\label{sec:approx}
There are various methods available to compute (or approximate) the 
matrix exponential.  One of the most used schemes (which is implemented 
in Matlab as the {\tt expm} function) is based on Pad\'e approximations 
combined with scaling and squaring \cite{higham,highambook}.  For 
a generic $n \times n$ matrix, this requires $\mathcal{O}(n^3)$ arithmetic 
operations and $\mathcal{O}(n^2)$ storage. The prefactor multiplying
$n^3$ in the arithmetic complexity can vary widely depending on the
sparsity and structural properties of $A$.

Once the matrix exponential 
is computed, both the subgraph centrality and the total 
communicability rankings are readily obtained.
However, to compute the subgraph centrality rankings, we do not need 
the complete matrix exponential, we only need the diagonal entries 
of $e^A$.  Methods for efficiently estimating individual 
entries of matrix functions have been developed by Golub, Meurant, 
and others \cite{golubmeurantbook,BenziGolub} and these methods have 
previously been applied to network analysis \cite{BenziBoito,BEK}.  
They are based on Gaussian quadrature and the Lanczos algorithm,
and they have been implemented in 
the Matlab toolbox {\tt mmq}  \cite{mmq}. The cost per node of estimating 
the subgraph centrality is typically $\mathcal{O}(n)$, giving a total cost of 
approximately $\mathcal{O}(n^2)$ for estimating the subgraph centrality for every 
node and computing the subgraph centrality rankings.  However, the 
coefficient of the $\mathcal{O}(n)$ estimate can be quite large.  
Additionally, the {\tt mmq} toolbox-based implementation for calculating 
subgraph centrality that we use here has not been optimized, unlike
the built-in Matlab function {\tt expm}. We mention in passing that 
methods for quickly determining the top $k$ nodes and only calculating 
the exact rankings on this subset have also been developed \cite{BEK,fmrr12}.

The individual entries of the matrix exponential are not necessary
for computing the total communicability rankings; only the 
row sums of $e^A$ are necessary. An efficient algorithm
for evaluating $f(A){\bf v}$ using 
a restarted Krylov method has recently been presented in \cite{Krylov1,Krylov2}.  
In this approach, the basic operation is represented by matrix-vector
products with $A$.
This method has been implemented in the Matlab toolbox {\tt  funm\_kryl} 
by Stefan G\"uttel \cite{funmkryl}. We apply this algorithm with $f(A) = e^A$
and ${\bf v} = {\bf 1}$. Clearly, the same algorithm can be used to
rapidly compute $C(A) = {\bf 1}^T e^A {\bf 1}$.
For many network of practical interest, the cost is typically
$\mathcal{O}(n)$, although the prefactor can vary considerably
for different types of networks.

\begin{table}
\centering
\caption{Timings (in seconds) to compute
centrality rankings based on the diagonal and row sum of 
$e^A$ for various test problems using different methods.} 
\begin{tabular}{|c|c|c|c|}
\hline
Graph & {\tt expm} & {\tt mmq} & {\tt funm\_kryl} \\
 \hline
 \hline
Zachary Karate Club & 0.062 & 0.138 &  0.120  \\
 \hline
 Drug User & 0.746 & 2.416 &  0.363 \\
 \hline
 Yeast PPI & 47.794 & 9.341 & 0.402 \\
 \hline
Pajek/Erdos971 & 0.542 & 2.447 & 0.317  \\
\hline
Pajek/Erdos972 & 579.214 & 35.674 & 0.410  \\
\hline
Pajek/Erdos982 & 612.920 & 39.242 & 0.393  \\
\hline
Pajek/Erdos992 & 656.270 & 53.019 & 0.325  \\
\hline
SNAP/ca-GrQc & 281.814 & 23.603 & 0.465  \\
\hline
SNAP/ca-HepTh & 2710.802 & 58.377 & 0.435 \\
\hline
SNAP/as-735 & 2041.439 & 75.619 & 0.498  \\
\hline
Gleich/Minnesota & 1.956 & 10.955 & 0.329 \\
 \hline
\end{tabular}
\label{tbl:real_exp_timings}
\end{table}

Table \ref{tbl:real_exp_timings} lists the timings for 
calculating the matrix exponential directly using {\tt expm}, 
estimating the subgraph centralities using the {\tt mmq} 
toolbox (with 5 iterations of the Lanczos algorithm per node),  
and estimating the total communicabilities using the 
{\tt funm\_kryl} toolbox to estimate $e^A{\bf 1}$ (using a 
very stringent stopping tolerance of 1e-16). 
These computations have been performed using Matlab Version 7.9.0 (R2009b) on 
a 2.4 GHZ Intel Core i5 processor with 4 GB of RAM.  
In general, the timings with {\tt expm} increase for
increasing number of nodes, but structural properties 
of the underlying graph, like the network diameter,
can have a very significant impact on
the computing times. For example, the
yeast PPI network and the Minnesota 
road network have approximately the same number $n$ of nodes (2224 and 
2642, respectively), yet computing the matrix exponential 
for the yeast network takes almost 25 times longer than for 
the Minnesota road network. 
This appears to be due to the fact that the yeast 
network has a much smaller diameter 
than the Minnesota network, therefore the
powers $A^k$ of the adjacency matrix fill up much more
quickly. Since the algorithm implemented in ${\tt expm}$
involves solving linear systems with polynomials
in $A$ as coefficient matrices, the execution time for
sparse matrices with small diameter tends to be much
higher than for matrices exhibiting a high degree of locality.  

For the majority of the networks tested, using the {\tt mmq} 
toolbox to estimate subgraph centrality was faster than using 
{\tt expm}, frequently by far.  
The exceptions (Zachary Karate Club, Drug User, 
Erdos971, and Minnesota) were the networks with a small number 
of nodes and/or a high diameter.  

The computation of the total communicabilities using 
{\tt funm\_kryl} was by far the fastest method for all networks tested, 
with the only exception of the tiny Zachary Karate Club network. In principle,
this is a clear
advantage of total communicability over subgraph
centrality. However, as we saw, the two methods often result
in rather different rankings, therefore we cannot simply
replace subgraph centrality with total communicability.

\subsection{A large-scale example}\label{sec:large}
In addition to the test results discussed above, we performed 
tests with the digraph of Wikipedia (as of June 6, 2011),
where nodes correspond to 
entries and directed links to hyperlinks from one entry to another.
In this case, the entries of $e^A{\bf 1}$ provide a ranking
of the hubs in the networks, see \cite{BEK}. 
This graph contains 4,189,503 nodes and 67,197,636 links, and it
is prohibitively large for centrality measures based on estimating
the diagonals of the matrix exponential.
For this reason, we limit ourselves to computations using the
{\tt funm\_kryl} toolbox to estimate the row sum vector $e^A{\bf 1}$. 
The restart parameter was set to 10 and we allowed a maximum of
50 restarts. The run time to obtain the rankings on a parallel system comprising  
24 Intel(R) Xeon(R) E5-2630 2.30GHz CPU(s) was 216.7 seconds.
This shows that centrality calculations using total communicability 
are quite feasible even for large networks.

\section{Resolvent-based centrality measures}
\label{sec:resolvent}
There are matrix functions other than the matrix exponential 
that may be used to calculate subgraph centrality and subgraph 
communicability scores.  The most common of these is the matrix resolvent
\begin{equation}
	(I-\alpha A)^{-1} = I + \alpha A +\alpha ^2A^2+ \cdots 
+ \alpha ^kA^k + \cdots = \sum_{k=0}^\infty \alpha ^kA^k\,,
\end{equation}
where $0 < \alpha < \frac{1}{\rho (A)}$, with $\rho(A)$ the
spectral radius of $A$.  This was first 
used by Katz \cite{Katz} in the 1950s and has been used in various 
forms since then \cite{Bonacich,bonchietal,Brandes,EHB11,NetworkProp,IRSurvey,Pagerank}.  
The bounds on $\alpha$ ensure that $I-\alpha A$ is invertible and that the 
geometric series converges to the inverse.  Additionally, the 
inverse is nonnegative; indeed, $I-\alpha A$ is a nonsingular $M$-matrix.
Note that if $A$ is the adjacency matrix of an undirected
network, $\rho(A) = \lambda_{\max}(A) =  \|A\|_2$. 
Since the spectral radius of a nonnegative matrix always
satisfies $\rho(A) \ge \min_i \sum_{j=1}^n a_{ij}$, it follows that
for a connected undirected graph $\alpha$ must be less than 1. 

Like the matrix exponential, $[(I-\alpha A)^{-1}]_{ii}$ counts the number 
of closed walks centered at node $i$ and $\sum_{j=1}^n [(I-\alpha A)^{-1}]_{ij}$ 
counts all walks between node $i$ and all other nodes in the network.  
In this case, however, a walk of length $k$ is penalized by a factor 
of $\alpha^k$.  One drawback of the use of the matrix resolvent in 
determining centrality rankings is the need to choose the value of $\alpha$; 
also, different values of $\alpha$ can lead to different rankings. 
For the purposes of the experiments below, we select $\alpha = \frac{0.85}{\lambda_{\max}(A)}$ 
(similar to the choice of parameter in PageRank \cite{Pagerank}). 

Resolvent-based total network communicability can also 
be evaluated.  As when using the matrix exponential (cf.~section 
\ref{sec:total_network}), the resolvent-based total network communicability 
is an upper bound for the resolvent-based Estrada index. In the following,
$C_r(A)=\sum_{i=1}^n\sum_{j=1}^n \left [(I-\alpha A)^{-1}\right ]_{ij}$
denotes the resolvent-based total communicability of a network.
The following Proposition can be easily proved along the
same lines as Proposition \ref{prop1}.

\begin{prop} Let $A$ be the adjacency matrix of a simple, 
undirected network on $n$ vertices.  Then for any 
$0 < \alpha < \frac{1}{\|A\|_2}$, $$EE_r(A) := {\rm Tr}\left[(I-\alpha A)^{-1}\right] 
\leq C_r(A) \leq \frac{n}{1-\alpha \|A\|_2}.$$
For an undirected network, $\lambda_{\max}(A) = \lambda_1$ 
can replace $\|A\|_2$ in 
the upper bound above.
 \end{prop}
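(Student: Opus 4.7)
The plan is to mimic the proof of Proposition \ref{prop1} verbatim, replacing $e^A$ by the resolvent $(I-\alpha A)^{-1}$ and invoking the assumption $0<\alpha<1/\|A\|_2$ at the two places where it is needed: nonnegativity of the matrix and boundedness of its spectral norm.

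For the lower bound, I would argue exactly as in Proposition \ref{prop1}: the trace is dominated by the sum of all entries provided all entries are nonnegative. Nonnegativity of $(I-\alpha A)^{-1}$ is already pointed out in the paragraph preceding the statement (since $I-\alpha A$ is a nonsingular $M$-matrix under the stated assumption on $\alpha$), so I would just invoke this fact and write
$$EE_r(A)=\sum_{i=1}^n\bigl[(I-\alpha A)^{-1}\bigr]_{ii}\le \sum_{i=1}^n\sum_{j=1}^n \bigl[(I-\alpha A)^{-1}\bigr]_{ij}=C_r(A).$$

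For the upper bound, I would express $C_r(A)$ as the quadratic form $\langle (I-\alpha A)^{-1}\mathbf{1},\mathbf{1}\rangle$ and apply the Cauchy--Schwarz inequality, in complete analogy with the exponential case, to obtain
$$C_r(A)\le \|(I-\alpha A)^{-1}\|_2\,\|\mathbf{1}\|_2^2 = n\,\|(I-\alpha A)^{-1}\|_2.$$
What remains is to identify $\|(I-\alpha A)^{-1}\|_2$ with $1/(1-\alpha\|A\|_2)$. This is where a little care is required, but since $A$ is symmetric its eigenvalues $\lambda_i$ are real and satisfy $|\lambda_i|\le\lambda_1=\|A\|_2$; combined with $0<\alpha<1/\|A\|_2$, this forces $1-\alpha\lambda_i>0$ for every $i$, so
$$\|(I-\alpha A)^{-1}\|_2 = \max_i\frac{1}{1-\alpha\lambda_i} = \frac{1}{1-\alpha\lambda_1} = \frac{1}{1-\alpha\|A\|_2},$$
which finishes the proof.

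There is really no serious obstacle here: both bounds are direct translations of the arguments already used for the exponential, and the role of the hypothesis $\alpha<1/\|A\|_2$ is precisely to make the resolvent well-defined, entrywise nonnegative, and spectrally bounded by a quantity matching the right-hand side of the claimed inequality. The only mildly delicate point is guaranteeing that the minimum of $|1-\alpha\lambda_i|$ over the spectrum is attained at $\lambda_1$ rather than at some negative eigenvalue, and this follows immediately from the symmetry of the spectrum bound $|\lambda_i|\le\lambda_1$ together with $\alpha\lambda_1<1$.
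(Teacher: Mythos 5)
Your proof is correct and is essentially the argument the paper intends: the paper gives no separate proof, stating only that the result follows along the same lines as Proposition \ref{prop1}, and your two steps (entrywise nonnegativity of the resolvent for the lower bound, Cauchy--Schwarz plus a bound on $\|(I-\alpha A)^{-1}\|_2$ for the upper bound) are exactly that adaptation. The only cosmetic difference is that you evaluate $\|(I-\alpha A)^{-1}\|_2$ exactly via the spectral theorem, whereas the direct analogue of the paper's step $\|e^A\|_2\le e^{\|A\|_2}$ would be the Neumann-series estimate $\|(I-\alpha A)^{-1}\|_2\le\sum_{k\ge 0}\alpha^k\|A\|_2^k=1/(1-\alpha\|A\|_2)$; both are valid here.
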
  

\begin{table}[t!]
  \centering
  \caption{Comparison using correlation coefficients of rankings based on the diagonal entries 
and row sums of $(I-\alpha A)^{-1}$ for 1000-node 
scale-free networks of various parameters built using the 
{\tt pref} function in the CONTEST Matlab toolbox.  For each 
instance, the results are measured for $\alpha =\frac{0.85}{\lambda_{\max}(A)}$.  
The values reported are the averages over 20 matrices with the same parameters.}
  \subfloat[][]{\begin{tabular}{|c|c|}
\hline
$d$  & $\mathpzc{cc}$ \\
 \hline
 \hline
 1 & 0.292 \\
 \hline
 2 & 0.370 \\
\hline
 3 & 0.442 \\
\hline
4 & 0.486 \\
\hline
5 & 0.536 \\
\hline
6 & 0.583 \\
\hline
7 & 0.607 \\
\hline
8 & 0.638 \\
\hline
9 & 0.667 \\
 \hline
\end{tabular}}%
  \qquad
  \subfloat[][]{\begin{tabular}{|c|c|}
\hline
$d$  & $\mathpzc{cc}$ \\
 \hline
 \hline
10 & 0.691 \\
  \hline
 20 & 0.840 \\
\hline
 30 & 0.890 \\
\hline
40 & 0.917 \\
\hline
50 & 0.933\\
\hline
60 & 0.942 \\
\hline
70 & 0.949 \\
\hline
 80 & 0.954 \\
\hline
90 & 0.958 \\
\hline
100 & 0.962 \\
\hline
\end{tabular}}%
  \qquad
  \subfloat[][]{\begin{tabular}{|c|c|}
\hline
$d$  & $\mathpzc{cc}$ \\
 \hline
\hline
110 & 0.964 \\
\hline
120 & 0.965 \\
\hline
130 & 0.968 \\
\hline
 140 & 0.970 \\
\hline
150 & 0.971 \\
\hline
160 & 0.973 \\
\hline
170 & 0.973 \\
\hline
 180 & 0.975 \\
\hline
190 & 0.976 \\
\hline
200 & 0.976 \\
\hline
\end{tabular}}%

 \label{tbl:pref_resolvent_compare}%
\end{table}

\begin{table}[t!]
  \centering
  \caption{Intersection distance comparison of rankings based on the diagonal entries 
and row sums of $(I-\alpha A)^{-1}$ for 1000-node 
scale-free networks of various parameters built using the 
{\tt pref} function in the CONTEST Matlab toolbox.  For each 
instance, the results are measured for $\alpha =\frac{0.85}{\lambda_{\max}(A)}$.  
The values reported are the averages over 20 matrices with the same parameters.}
  \subfloat[][]{\begin{tabular}{|c|c|c|}
\hline
$d$  & isim & ${\rm isim}_{10\%}$  \\
 \hline
 \hline
 1 & 0.186 & 0.491 \\
 \hline
 2 & 0.205 & 0.364 \\
\hline
 3 & 0.192 & 0.235 \\
\hline
4 & 0.179 & 0.173  \\
\hline
5 & 0.163 & 0.126 \\
\hline
6 & 0.150 & 0.102 \\
\hline
7 & 0.137 & 0.082 \\
\hline
8 & 0.124 & 0.068 \\
\hline
9 & 0.115 & 0.059 \\
 \hline
\end{tabular}}%
  \qquad
  \subfloat[][]{\begin{tabular}{|c|c|c|}
\hline
$d$  & isim & ${\rm isim}_{10\%}$  \\
 \hline
 \hline
10 & 0.105 & 0.051 \\
  \hline
 20 & 0.055 & 0.020 \\
\hline
 30 & 0.035 & 0.012 \\
\hline
40 & 0.025 & 0.007 \\
\hline
50 & 0.019 & 0.005 \\
\hline
60 & 0.015 & 0.004 \\
\hline
70 & 0.012 & 0.003 \\
\hline
 80 & 0.010 & 0.002 \\
\hline
90 & 0.009 & 0.002 \\
\hline
100 & 0.007 & 0.001  \\
\hline
\end{tabular}}%
  \qquad
  \subfloat[][]{\begin{tabular}{|c|c|c|}
\hline
$d$  & isim & ${\rm isim}_{10\%}$  \\
 \hline
\hline
110 & 0.006 & 0.001 \\
\hline
120 & 0.005 & 7.12e-4 \\
\hline
130 & 0.005 & 6.98e-4 \\
\hline
 140 & 0.004 & 5.74e-4 \\
\hline
150 & 0.004 & 5.62e-4  \\
\hline
160 & 0.003 & 3.69e-4  \\
\hline
170 & 0.003 & 4.25e-4 \\
\hline
 180 & 0.003 & 3.11e-4 \\
\hline
190 & 0.003 & 3.16e-4 \\
\hline
200 & 0.002 & 4.00e-4  \\
\hline
\end{tabular}}%
 \label{tbl:pref_resolvent_compare_isim}%
\end{table}

The resolvent-based subgraph centrality and total 
communicability rankings were compared on the same two sets of synthetic 
networks used for the tests in section \ref{sec:tests}.

Table \ref{tbl:pref_resolvent_compare} lists the average correlation 
coefficient between the subgraph centrality and total 
communicability rankings for  the nodes in networks constructed using the 
preferential attachment model (function {\tt pref} in CONTEST) 
and Table \ref{tbl:pref_resolvent_compare_isim} lists the 
intersection distances for all the nodes and for the top 10\% of the nodes. 
For small values of $d$ ($1 \leq d \leq 3$), 
the correlation coefficients between the two sets of rankings 
using the matrix resolvent are close to those using the matrix 
exponential. 
However, when using the matrix 
exponential the average correlation coefficient was found to be
greater than 0.9 
for all $d \geq 4$, and exactly 1 for all $d \geq 8$.  Using the 
matrix resolvent the correlation coefficient grows 
as $d$ increases, but somewhat more slowly than for the
matrix exponential. The intersection distances are also larger 
for all values of $d$ when the matrix resolvent is used, although they 
also decrease as $d$ increases. 
Moreover, we did not find a single instance where the two methods
produced {\em exactly} the same rankings.

\begin{table}%
  \centering
  \caption{Comparison using correlation coefficients of rankings based on the diagonal entries 
and row sums of $(I-\alpha A)^{-1}$ for 1000-node small world 
networks of various parameters built using the {\tt smallw} 
function with $p=0.1$ in the CONTEST Matlab toolbox.  For 
each instance, the results are measured for 
$\alpha =\frac{0.85}{\lambda_{\max}(A)}$.  The values reported are 
the averages over 20 matrices with the same parameters.}
  \subfloat[][]{\begin{tabular}{|c|c|}
\hline
$d$  & $\mathpzc{cc}$ \\
 \hline
 \hline
 1 & 0.065 \\
 \hline
 2 & 0.023 \\
\hline
 3 & 0.052 \\
\hline
4 & 0.052 \\
\hline
5 & 0.052 \\
\hline
6 & 0.051 \\
\hline
7 & 0.062 \\
\hline
8 & 0.037 \\
\hline
9 & 0.050 \\
 \hline
\end{tabular}}%
  \qquad
  \subfloat[][]{\begin{tabular}{|c|c|}
\hline
$d$  & $\mathpzc{cc}$ \\
 \hline
 \hline
10 & 0.063 \\
  \hline
 20 & 0.078 \\
\hline
 30 & 0.080 \\
\hline
40 & 0.135 \\
\hline
50 & 0.144 \\
\hline
60 & 0.141 \\
\hline
70 & 0.144 \\
\hline
 80 & 0.133 \\
\hline
90 & 0.248 \\
\hline
100 & 0.190 \\
\hline
\end{tabular}}%
  \qquad
  \subfloat[][]{\begin{tabular}{|c|c|}
\hline
$d$  & $\mathpzc{cc}$ \\
 \hline
\hline
110 & 0.294 \\
\hline
120 & 0.246 \\
\hline
130 & 0.275 \\
\hline
140 & 0.311 \\
\hline
150 & 0.312 \\
\hline
160 & 0.321 \\
\hline
170 & 0.301 \\
\hline
180 & 0.293 \\
\hline
190 & 0.354 \\
\hline
200 & 0.300 \\
\hline
\end{tabular}}%

 \label{tbl:smallw_resolvent_compare}%
\end{table}

\begin{table}%
  \centering
  \caption{Intersection distance comparison of rankings based on the diagonal entries 
and row sums of $(I-\alpha A)^{-1}$ for 1000-node small world 
networks of various parameters built using the {\tt smallw} 
function with $p=0.1$ in the CONTEST Matlab toolbox.  For 
each instance, the results are measured for 
$\alpha =\frac{0.85}{\lambda_{\max}(A)}$.  The values reported are 
the averages over 20 matrices with the same parameters.}
  \subfloat[][]{\begin{tabular}{|c|c|c|}
\hline
$d$  & isim & ${\rm isim}_{10\%}$  \\
 \hline
 \hline
 1 & 0.040 & 0.149 \\
 \hline
 2 & 0.070 & 0.189 \\
\hline
 3 & 0.085 & 0.241 \\
\hline
4 & 0.091 & 0.269 \\
\hline
5 & 0.098 & 0.301 \\
\hline
6 & 0.104 & 0.318 \\
\hline
7 & 0.126 & 0.361 \\
\hline
8 & 0.135 & 0.414 \\
\hline
9 & 0.149 & 0.413 \\
 \hline
\end{tabular}}%
  \qquad
  \subfloat[][]{\begin{tabular}{|c|c|c|}
\hline
$d$  & isim & ${\rm isim}_{10\%}$  \\
 \hline
 \hline
10 & 0.156 & 0.435 \\
  \hline
 20 & 0.207 & 0.508 \\
\hline
 30 & 0.198 & 0.517 \\
\hline
40 & 0.204 & 0.571 \\
\hline
50 & 0.207 & 0.621 \\
\hline
60 & 0.191 & 0.588 \\
\hline
70 & 0.181 & 0.582 \\
\hline
 80 & 0.189 & 0.607 \\
\hline
90 & 0.156 & 0.597 \\
\hline
100 & 0.179 & 0.585  \\
\hline
\end{tabular}}%
  \qquad
  \subfloat[][]{\begin{tabular}{|c|c|c|}
\hline
$d$  & isim & ${\rm isim}_{10\%}$  \\
 \hline
\hline
110 & 0.147 & 0.541 \\
\hline
120 & 0.148 & 0.553 \\
\hline
130 & 0.160 & 0.554 \\
\hline
140 & 0.142 & 0.560 \\
\hline
150 & 0.123 & 0.542  \\
\hline
160 & 0.121 & 0.539  \\
\hline
170 & 0.124 & 0.517 \\
\hline
180 & 0.125 & 0.512 \\
\hline
190 & 0.114 & 0.504 \\
\hline
200 & 0.123 & 0.504  \\
\hline
\end{tabular}}%

 \label{tbl:smallw_resolvent_compare_isim}%
\end{table}

For the small world networks, all values $1 \leq d \leq 10$ 
as well as as all multiples of 10 with $20 \leq 10 \leq 200$ 
were tested.  For each $d$, twenty networks were tested. The 
averages of the correlation coefficients between the subgraph 
centrality and total communicability rankings can be found
in Table \ref{tbl:smallw_resolvent_compare} and the average 
intersection distances for both all the nodes and the top 10\% 
of the nodes can be found in Table \ref{tbl:smallw_resolvent_compare_isim}. 
As was the case for
the matrix exponential, 
the two methods (diagonal entries and row sums)
using the matrix resolvent exhibit much weaker correlations 
for this class of networks than for the preferential attachment
networks; indeed, the correlations tend to be even smaller 
for the resolvent than for the exponential.
 For $d=1$, the average correlation is 0.065 
and the average intersection distance was 0.040
using the resolvent, compared to a correlation of 0.177 and 
an intersection distance of 0.015 using the exponential.  
For the values of $d$ tested, the highest average correlation 
coefficient was 0.354, for $d=190$.  
When looking at the intersection distances for other values of $d$, 
the picture is somewhat 
different. Comparing Table \ref{tbl:smallw_resolvent_compare_isim}
with Table \ref{tbl:smallw_exp_compare_isim}, we see that for small $d$ 
the intersection distance between the two ranking schemes tends to be
somewhat higher with the matrix exponential than with the resolvent.
However, as $d$ increases the intersection distance eventually drops 
with the matrix exponential, but not with the resolvent. This is true
both when looking at the ranking of all the nodes and when looking
at only the top 10\%. 

\begin{table}
\centering
\caption{Comparison using correlation coefficients of rankings based on the diagonal entries and 
row sums of  $(I-\alpha A)^{-1}$ with  $\alpha=\frac{0.85}{\lambda_{\max}(A)}$ 
for various real-world networks.}
\begin{tabular}{|c|c|c|c|}
\hline
Graph & $\mathpzc{cc}$ & $\mathpzc{cc}_{10}$ & $\mathpzc{cc}_{1}$ \\
 \hline
 \hline
Zachary Karate Club & 0.589 & 1 & 1  \\
 \hline
 Drug User & 0.189 & -- & --  \\
 \hline
 Yeast PPI & 0.177 & -- & --  \\
 \hline
Pajek/Erdos971 & 0.233 & -- & 1 \\
\hline
Pajek/Erdos972 & 0.215 & -- & --  \\
\hline
Pajek/Erdos982 & 0.207 & -- & --  \\
\hline
Pajek/Erdos992 & 0.197 & -- & ---  \\
\hline
SNAP/ca-GrQc & 0.070 & -- & --  \\
\hline
SNAP/ca-HepTh & 0.072 & -- & -- \\
\hline
SNAP/as-735 & 0.204 & -- & ---  \\
\hline
Gleich/Minnesota & 0.019 & -- & --  \\
 \hline
\end{tabular}
\label{tbl:real_resolvent_compare}
\end{table}

\begin{table}
\centering
\caption{Intersection distance comparison of rankings based on the diagonal entries and 
row sums of  $(I-\alpha A)^{-1}$ with  $\alpha=\frac{0.85}{\lambda_{\max}(A)}$ 
for various real-world networks.} 
\begin{tabular}{|c|c|c|c|c|c|c|c|}
\hline
Graph & isim & ${\rm isim}_{10\%}$ & ${\rm isim}_{1\%}$ \\
 \hline
 \hline
Zachary Karate Club & 0.061 & 0 & 0 \\
 \hline
 Drug User & 0.125 & 0.145 & 0.069 \\
 \hline
 Yeast PPI & 0.204 & 0.363 & 0.187 \\
 \hline
Pajek/Erdos971 & 0.080 & 0.050 & 0 \\
\hline
Pajek/Erdos972 & 0.110 & 0.273 & 0.263 \\
\hline
Pajek/Erdos982 & 0.109 & 0.269 & 0.264  \\
\hline
Pajek/Erdos992 & 0.109 & 0.271 & 0.247 \\
\hline
SNAP/ca-GrQc & 0.047 & 0.122 & 0.033  \\
\hline
SNAP/ca-HepTh & 0.058 & 0.159 & 0.236  \\
\hline
SNAP/as-735 & 0.247 & 0.513 & 0.271  \\
\hline
Gleich/Minnesota & 0.102 & 0.301 & 0.557 \\
 \hline
\end{tabular}
\label{tbl:real_resolvent_compare_isim}
\end{table}

Next, we consider tests with real-world networks.
As shown in Table \ref{tbl:real_resolvent_compare}, 
the correlation coefficients 
between the two ranking systems for the whole set of nodes 
were higher (in a majorityy of cases) using the matrix resolvent than they 
were using the matrix exponential. 
(Again, a ``--" signifies that correlation coefficients could
not be computed due to the fact that the two ranking schemes 
produced different lists of nodes.) Only the Erdos971, as-735, 
and the Minnesota networks had a higher correlation coefficient 
between the two ranking systems under the exponential than under 
the matrix resolvent.  This can be understood when looking at the 
normalized Estrada indexes and total network communicabilities 
in Table \ref{tbl:real_resolvent_tnc}.  The smaller the factor 
$\alpha$, the more it minimizes the contribution of the network 
data from $A$ to the scores produced by the diagonal entries or 
row sums of $(I-\alpha A)^{-1}$.  This can also be seen by 
noticing that as $\alpha\rightarrow0$, $(I-\alpha A)^{-1}$ 
approaches the identity.  In these experiments, $\alpha = \frac{0.85}{\lambda_{\max}(A)}$. 
However, this also means that for the networks tested with a large maximum 
eigenvalue (such and ca-GrQc, ca-HepTh, and as-735) 
$\alpha$ is quite small, causing the resulting subgraph 
centrality scores to be small and, consequently, close together.  
In the case of a network with a small maximum eigenvalue 
(such as the Minnesota network), the effect of $\alpha$ is 
not as pronounced. The compression of the score values 
means that a perturbation of the scores (such as occurs 
when switching from subgraph centrality scores to total 
communicability scores) has a large effect on the node 
rankings, especially for the higher ranked nodes.

\begin{table}
\centering
\caption{Comparison of the normalized resolvent-based Estrada index $EE_r(A)/n$ 
and total network connectivity $C_r(A)/n$ for various real-world networks.  
Here, $f(A) = (I-\alpha A)^{-1}$ with $\alpha = \frac{0.85}{\lambda_{\max}(A)}$.}
\begin{tabular}{|c|c|c|}
\hline
Graph & normalized $EE_r(A)$ & normalized $C_r(A)$  \\
 \hline
 \hline
Zachary Karate Club & 1.21 & 5.13 \\
 \hline
 Drug User & 1.03 & 2.36 \\
 \hline
 Yeast PPI & 1.03 & 2.17  \\
 \hline
Pajek/Erdos971 & 1.03 & 2.44 \\
\hline
Pajek/Erdos972 & 1.01 & 1.70 \\
\hline
Pajek/Erdos982 & 1.01 & 1.66 \\
\hline
Pajek/Erdos992 & 1.01 & 1.60 \\
\hline
SNAP/ca-GrQc & 1.00 & 1.21 \\
\hline
SNAP/ca-HepTh & 1.01 & 1.24  \\
\hline
SNAP/as-735 & 1.00 & 1.86  \\
\hline
Gleich/Minnesota & 1.27 & 3.44 \\
 \hline
\end{tabular}
\label{tbl:real_resolvent_tnc}
\end{table}

When only the top 1\% of nodes were considered, the 
exponential subgraph centrality and exponential total 
communicability rankings were much closer together than their resolvent 
counterparts, where often the top 1\% of nodes were not even the same.  This 
seems to indicate that when using the resolvent, 
the subgraph centrality 
and total communicability tend to rank the less important 
nodes more similarly than they do under the matrix exponential.  
Under the matrix exponential, the two rankings seem to agree 
more closely on the important nodes than they do when using 
the resolvent. 
This can also be seen when looking at the intersection distance, which gives more weight to differences in the top ranked nodes than in the lower ranked nodes.  For all networks except ca-HepTh, the intersection distance between the two rankings is smaller when using the exponential than when using the resolvent.  When looking at the top 1\% of nodes, the intersection distances are also 
smaller (often much smaller) in the case of the exponential, for all except three of the networks.  The exceptions are the Minnesota road network (which has a large intersection distance on the top 1\% of nodes for both the exponential and the resolvent) and the Zachary Karate Club and Erdos971 networks (which have isim$_{1\%}=0$ for both cases). 

Another observation that can be made is that the resolvent-based
total network communicability $C_r(A)$ is unable to discriminate
between highly connected networks and poorly connected ones, in
stark contrast with the exponential-based one. For instance, in
the case of the Minnesota road network $\alpha$ is relatively 
large (since $\lambda_1$ is small for this graph), hence the
off-diagonal contributions to $C_r(A)$ are more significant
than for other networks where $\lambda_1$ is large (thus forcing
a small value of $\alpha$, leading to a resolvent very close
to the identity matrix). Thus, only the exponential-based
total network communicability should be used when comparing 
different networks in terms of ease of communication.

When the identification of essential proteins in the Yeast PPI 
network is considered using resolvent-based total communicability, 
the results are comparable to those using the exponential.  
The resolvent-baed total communicability rankings with 
$\alpha = \frac{0.85}{\lambda_{\max}(A)}$ identified 17 essential 
proteins in the top 30 (as compared to 18 identified by 
exponential subgraph centrality and total communicability).  
The resolvent-based subgraph centrality, however, identified 
19 essential proteins in the top 30, slightly outperforming the other methods. 

Concerning the computational complexity, when dealing with
large networks the use of the conjugate gradient 
method (possibly with some type of preconditioning)
to solve the linear system $(I - \alpha A){\bf x} = {\bf 1}$ is
orders of magnitude faster than trying to estimate the diagonal entries
of $(I - \alpha A)^{-1}$. For certain networks, Chebyshev
semi-iteration can be even faster \cite{BK12}.
 Thus, as was the case for the matrix
exponential, rankings based on total communicability 
(row sums) are a lot cheaper than the rankings based
on subgraph centrality (diagonals). Once again, however,
the two ranking methods in general produce different
rankings, so one should not choose between the two based
solely on computational cost.

\section{Conclusions} \label{sec:concl}
We have examined the use of total communicability 
as a method for ranking the importance of nodes in a network.  
Like the subgraph centrality ranking, the total 
communicability ranking using the matrix exponential counts 
the number of walks starting at a given node, weighing walks 
of length $k$ by a penalization factor of $\frac{1}{k!}$.  
However, instead of only counting closed walks, it counts 
all walks between the given node and every node in the network.  
If the matrix resolvent is used, the weight on the walks 
becomes $\alpha ^k$ for a chosen parameter $\alpha$ in a certain range. 
There are various classes of graphs on which it can 
be shown that the two exponential-based rankings are always 
identical or in very good agreement; for instance, certain
types of simple regular graphs and Erd\"os--Renyi random
graphs with large spectral gap.
However, as is well known, these classes are not realistic 
models of real-world complex networks.

The two sets of rankings (total communicability 
and subgraph centrality) have been used to rank the nodes 
of networks corresponding to both real and synthetic data 
sets. The synthetic data sets were constructed using the 
preferential attachment (Barab\'asi--Albert) 
and the small world (Watts--Strogatz) models,
corresponding to the functions {\tt pref} and {\tt smallw}
of the CONTEST toolbox 
for Matlab. Good agreement between the two ranking methods
was observed on the networks obtain with the preferential
attachment method, especially as the density of the graphs
increased. More pronounced differences
between the rankings produced with the two methods were observed
in the case of small world networks.
Overall, the two importance 
rankings matched more closely when the matrix exponential 
was used than when under the matrix resolvent.  
 
We also presented the results of experiments with
real-world networks including social networks, citation 
networks, PPI networks, and infrastructure (transportation)
networks.
Here we found that overall, the two (complete) sets 
of rankings were closer to each other when the matrix resolvent was 
used instead of the matrix exponential.  However, when only 
the top 1\% of nodes was examined, the rankings matched more 
closely when the matrix exponential was used.  
This suggests 
that, for the networks tested, the resolvent-based rankings 
match more closely on ``unimportant'' (low-ranked) nodes and 
the exponential-based rankings exhibit more agreement on the 
``important'' (top-ranked) nodes.
 
In general, there is no simple way to compare two ranking schemes 
and determine that one is ``better'' than the other.  However, 
the total communicability rankings take into account 
more of the network topology than the subgraph centrality rankings 
(all walks starting at node $i$ versus all closed walks starting 
at node $i$).  This added information often (but not always) changes 
the ranking of the nodes to a certain degree, although there are 
many cases where there is still a strong similarity between the 
two sets of rankings.  The main benefit of using total 
communicability to rank the nodes is that the ranking can be 
estimated extremely quickly using Krylov subspace methods.
Indeed, as the Wikipedia graph calculation described in
section \ref{sec:large} shows,
for very large networks only the total communicability
(row sum) method is computationally feasible, the
subgraph centrality ranking being prohibitively expensive 
to compute.
Even if total communicability cannot always be recommended as a 
cheaper alternative to subgraph centrality, it provides valuable 
information about the network and can be used along
with other ranking schemes.

Finally, we have introduced the total communicability of a network
as a global measure of connectivity and of the ease of information
flow on a given network. This measure can be computed quickly
even for very large networks, and could be of interest in the
design of communication networks.
 
\section*{Acknowledgments}
We are indebted to Prof.~Ernesto Estrada (University of
Strathclyde) for providing the Intravenous Drug User and
Yeast PPI network data, and to Mr.~Yu Wang (Emory University)
for performing the calculations with the Wikipedia graph.\\


\end{document}